\newtheorem{lem}{Lemma}
\newtheorem{thm}{Theorem}
\newtheorem{cor}{Corollary}
\def\@mkboth#1#2{}
\newlength\appendixwidth
\preto\appendix{\addtocontents{toc}{\protect\patchl@section}}
\newcommand{\patchl@section}{%
   \settowidth{\appendixwidth}{\textbf{Appendix }}%
   \addtolength{\appendixwidth}{1.5em}%
   \patchcmd{\l@section}{1.5em}{\appendixwidth}{}{\ddt}%
}
\begin{document}

\newcommand{\E}{\mathbb{E}}
\newcommand{\V}{\mathbb{V}} 

\newcommand{\DegreeSecondMoment}{\left< k^2 \right>}
\newcommand{\DegreeSecondMomentLinearTree}{4 - \frac{6}{n}}
\newcommand{\DegreeSecondMomentBalancedBistarTree}{???}
\newcommand{\DegreeSecondMomentQuasistarTree}{n-3 - \frac{6}{n}}
\newcommand{\DegreeSecondMomentStarTree}{n-1}
\newcommand{\LargestDegreeBalancedBistarTree}{\left\lceil \frac{n}{2} \right\rceil} 
\newcommand{\SecondLargestDegreeBalancedBistarTree}{\left\lfloor \frac{n}{2} \right\rfloor} 

\newcommand{\DmaxLinearTreeFloor}{\left\lfloor \frac{n^2}{2} \right\rfloor - 1}
\newcommand{\DmaxLinearTreeMod}{\frac{1}{2}(n^2 - {n \bmod 2}) - 1}
\newcommand{\DmaxBistarTree}{k_1 (n - k_1) + \frac{n}{2}(n - 3) + 1}
\newcommand{\DmaxBalancedBistarTree}{\frac{1}{4}\left(3(n-1)^2 + 1 - n \bmod 2 \right)}
\newcommand{\DmaxQuasistarTree}{\frac{1}{2}(n+3)(n-2)}
\newcommand{\DmaxStarTree}{{n \choose 2}}

\newcommand{\DminLinearTree}{n - 1}

\newcommand{\DminCaterpillarTreeFloorHorton}{n - 1 + \sum_{i=1}^n \left\lfloor \frac{1}{4} (k_i - 1)^2 \right\rfloor}
\newcommand{\DminCaterpillarTreeFloor}{\sum_{i=1}^n \left\lfloor \frac{1}{4} (k_i+1)^2 \right\rfloor - (n - 1)}
\newcommand{\DminBistarTreeFloor}{\left\lfloor \frac{1}{4}(k_1 + 1)^2 \right\rfloor + \left\lfloor \frac{1}{4}(n - k_1 + 1)^2 \right\rfloor - 1}
\newcommand{\DminBalancedBistarTreeFloor}{\left\lfloor \frac{1}{8}(n+2)^2 \right\rfloor - 1}
\newcommand{\DminQuasistarTreeFloor}{\left\lfloor \frac{1}{4} (n - 1)^2 \right\rfloor + 1}
\newcommand{\DminStarTreeFloor}{\left\lfloor \frac{1}{4}n^2 \right\rfloor}

\newcommand{\DminCaterpillarTreeMod}{\frac{1}{4}\left(n\DegreeSecondMoment + q\right)}
\newcommand{\DminBistarTreeMod}{\frac{1}{2}k_1(k_1 - n) + \frac{1}{4}\left[n(n + 2) + q'\right] - 1}
\newcommand{\DminBalancedBistarTreeMod}{\frac{1}{8}(n^2+4n- 4 - \phi)}
\newcommand{\DminQuasistarTreeMod}{\frac{1}{4}[n(n - 2) + {n \bmod 2}] + 1}
\newcommand{\DminStarTreeMod}{\frac{1}{4}(n^2 - {n \bmod 2})}

\newcommand{\DminBalancedBistarTreeModDecrement}{\left[3 - 4 \left(\left\lfloor \frac{n}{2} \right\rfloor \bmod 2\right)\right](1 - {n \bmod 2})}
\newcommand{\DminBalancedBistarTreeModDecrementCarlos}{((n+2)^2 \bmod 8)}

\title{The variation of the sum of edge lengths in linear arrangements of trees}

\author{Ramon Ferrer-i-Cancho$^1$, Carlos G\'omez-Rodr\'iguez$^2$ and Juan Luis Esteban$^3$}
\address{$^1$ Complexity \& Quantitative Linguistics Lab, LARCA Research Group \\
Departament de Ci\`encies de la Computaci\'o, \\
Universitat Polit\`ecnica de Catalunya, \\
Campus Nord, Edifici Omega, Jordi Girona Salgado 1-3. \\
08034 Barcelona, Catalonia (Spain)}
\address{$^2$ Universidade da Coru\~na, CITIC \\
FASTPARSE Lab, LyS Research Group \\ 
Departamento de Ciencias de la Computaci\'on y Tecnolog\'ias de la Informaci\'on, \\ 
Facultade de Inform\'atica, \\ 
Campus de A Coru\~na, 15071 A Coru\~na, Spain
}
\address{$^3$ 
Departament de Ci\`encies de la Computaci\'o, \\
Universitat Polit\`ecnica de Catalunya, \\
Campus Nord, Edifici Omega, Jordi Girona Salgado 1-3. \\
08034 Barcelona, Catalonia (Spain)}
\ead{rferrericancho@cs.upc.edu, cgomezr@udc.es, esteban@cs.upc.edu}

\begin{abstract}
A fundamental problem in network science is the normalization of the topological or physical distance between vertices, that requires understanding the range of variation of the unnormalized distances. 
Here we investigate the limits of the variation of the physical distance in linear arrangements of the vertices of trees. In particular, we investigate various problems on the sum of edge lengths in trees of a fixed size: the minimum and the maximum value of the sum for specific trees, the minimum and the maximum in classes of trees (bistar trees and caterpillar trees) and finally the minimum and the maximum for any tree. We establish some foundations for research on optimality scores for spatial networks in one dimension.  
\end{abstract}
	

\pacs{89.75.Hc Networks and genealogical trees \\
89.75.Da Systems obeying scaling laws \\ 
89.75.Fb Structures and organization in complex systems}


\tableofcontents

\maketitle

\section{Introduction}

A fundamental problem in network science is the normalization of the distance between vertices \cite{Latora2001a, Ferrer2004b, Barthelemy2018a, Ferrer2003a, Zamora-Lopez2019a}.
The problem is actually two-fold depending on whether the focus is on topological distance, i.e. the distance between vertices in terms of number of edges \cite{Latora2001a, Ferrer2003a, Zamora-Lopez2019a} or physical distance, i.e. the distance between vertices in some metric space, that may not be Euclidean \cite{Ferrer2004b, Barthelemy2018a}. 

Concerning topological distance, namely distance on a network, the simplest measure of topological distance is the average path length or characteristic path length, which can be defined on an undirected network $g$ of $n$ vertices as \cite{Latora2003a},
\begin{equation*}
\left< l \right>^g = \frac{1}{{n \choose 2}} \sum_{i<j} l_{ij}, 
\end{equation*}
where $l_{ij}$ is the minimum distance in edges between vertices $i$ and $j$ in $g$. 

\begin{figure}
\centering
\includegraphics{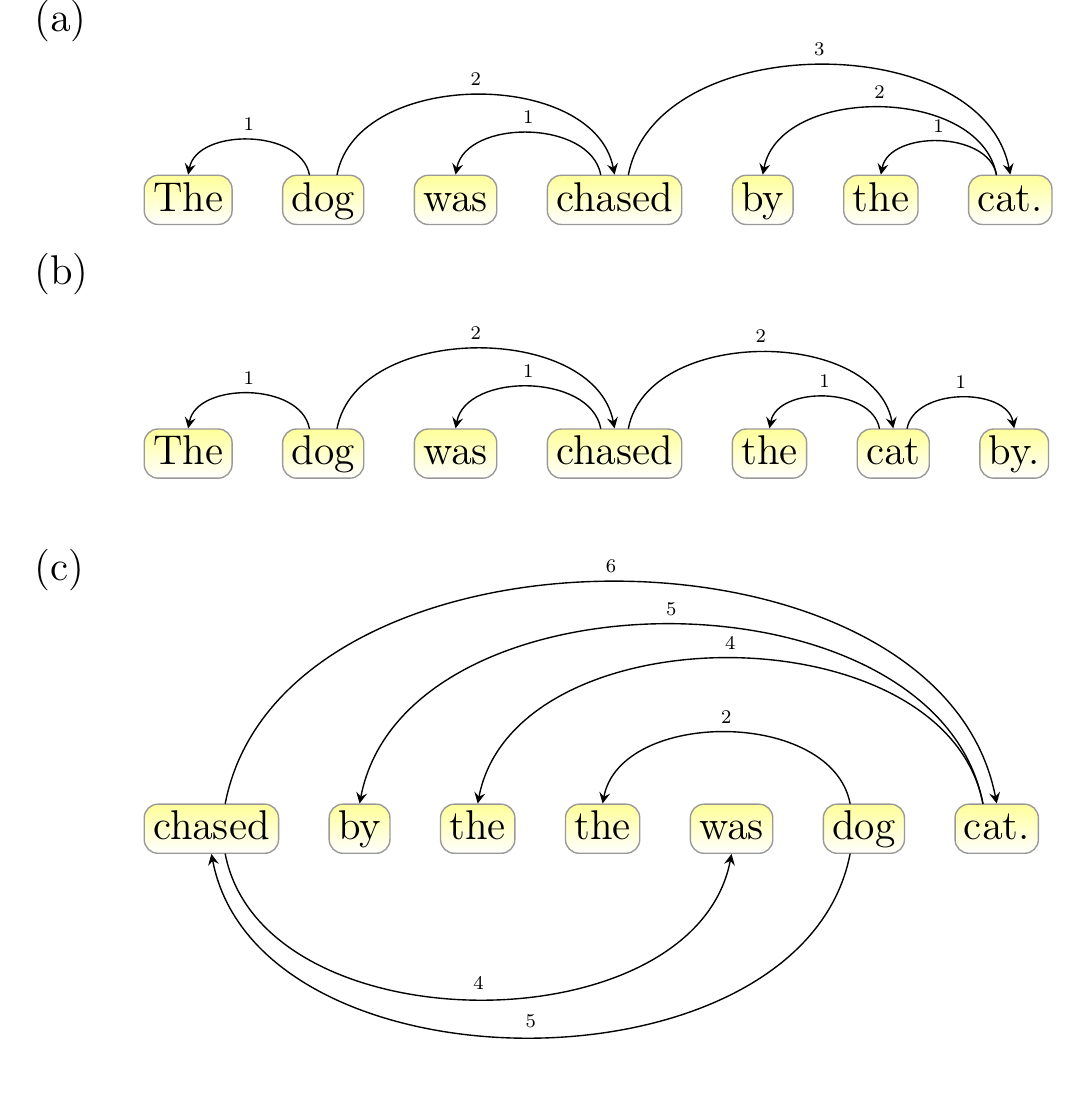} 
\caption{\label{syntactic_dependency_tree_figure} Different linear arrangements of the same syntactic dependency tree. Here edge labels indicate the distance between the linked words (in words). (a) Original linear arrangement with $D^t=10$ as it appears in \url{https://universaldependencies.org/introduction.html}. (b) Minimum linear arrangement, i.e. $D^t = D_{min}^t = 8$. $D_{min}^t = 8$ is obtained from Table \ref{summary_table} noting that $t$ is a caterpillar with $n \DegreeSecondMoment = 26$ and $q = 6$. (c) Maximum linear arrangement, i.e. $D^t = D_{max}^t = 26$. $D_{max}^t$ was computed as the maximum of $D^t$ over the $7!$ linear arrangements. } 
\end{figure}

When a graph $g$ is embedded in some space (by assigning to each vertex a position in that space), the length of a link is the physical distance between the two vertices it connects and $D^g$ represents the sum of the lengths of all links of $g$ \cite{Barthelemy2018a}. $D^g$ can be defined as
\begin{equation*}
D^g = \sum_{i<j} a_{ij}d_{ij}, 
\end{equation*} 
where $a_{ij}$ indicates if vertices $i$ and $j$ are connected ($a_{ij} = 1$ if they are connected; $a_{ij} = 0$ otherwise) and $d_{ij}$ is the physical distance between vertices $i$ and $j$ in $g$. 
When vertices $i$ and $j$ are linked, $d_{ij}$ is the edge length. 
In a network of $m$ edges, the average edge length \cite{Ferrer2004b} 
\begin{equation*} 
\left<d \right>^g = \frac{1}{m}D^g,
\end{equation*}
is the counterpart of $\left< l \right>^g$ in physical space. However, notice that $\left< l \right>^g$ is an average over all pairs of vertices, no matter if they are connected or not. In contrast, $\left<d \right>^g$ and $D^g$ are restricted to pairs of linked vertices. 

Traditionally, $D^g$ has been defined on a Euclidean two-dimensional space \cite{Barthelemy2018a}.    
Here we focus on the problem of the range of variation of the physical distance $D^g$ in one physical dimension when the network structure (namely the adjacency matrix $A=\left\{a_{ij}\right\}$ that defines $g$) remains constant. 
We consider a particular embedding in one dimension: linear arrangements of vertices whereby the position of a vertex is its position in a sequential ordering of the vertices. A prototypical example that motivates our research 	is the syntactic dependency network of a sentence, where vertices are words, edges indicate syntactic dependencies and the order of the words in the sentence defines a linear arrangement: the 1st word of the sentence takes position 1, the second word of the sentence takes position 2 and so on (figure \ref{syntactic_dependency_tree_figure}) \cite{Liu2017a}. There the distance between two vertices is usually defined as the absolute value of the difference between vertex positions: then consecutive words are at distance 1, words separated by a word are at distance 2 and so on \cite{Ferrer2004b}. In the example (figure \ref{syntactic_dependency_tree_figure} (a)), $D^g = 10$. The primary goal of this article is to establish some mathematical foundations for research on the range of variation of physical distance when $g$ is a fixed tree and physical distances are determined by the linear arrangement of its vertices as in figure \ref{syntactic_dependency_tree_figure}, where different linear arrangements of the same $g$ are shown. In particular, we aim to make a contribution on this physical distance that parallels the current understanding of the variation of topological distance \cite{Zamora-Lopez2019a} while establishing mathematical foundations for research on optimality scores on this physical distance. For these reasons, we review next the state of research on the range of variation on topological distances and their normalization.
 
In a connected network, $\left< l \right>^g$ varies 
between its value in a complete graph, a graph with as many edges as possible, and a linear tree, a tree with maximum degree two (figure \ref{various_trees_figure} (a)), i.e. \cite{Ferrer2003a}
\begin{equation*}
\left< l \right>^{complete} \leq \left< l \right> \leq \left< l \right>^{linear},
\end{equation*}
where 
\begin{eqnarray*}
\left< l \right>^{complete} = 1 \\ 
\left< l \right>^{linear} = \frac{n+1}{3}. 
\end{eqnarray*}
In trees, connected networks minimizing the number of edges, one has \cite{Ferrer2003a}
\begin{equation} 
\left< l \right>^{star} \leq \left< l \right>^g \leq \left< l \right>^{linear},
\label{variation_of_mean_topological_distance_equation}
\end{equation}
where 
\begin{equation*}
\left< l \right>^{star} = 2(n-1)/n
\end{equation*}
corresponds to a star tree, a tree with a hub of maximum degree, namely $n-1$ (figure \ref{various_trees_figure} (b)).
In \cite{Ferrer2003a}, the ratio 
\begin{equation*}
\lambda^g = \frac{\left< l \right>^g}{\left< l \right>^{linear}},
\end{equation*}
was used as a normalized measure of topological distance cost ($\lambda \leq 1$). Recently, two normalizations of $\left< l \right>$ have been investigated \cite{Zamora-Lopez2019a}
\begin{eqnarray*}
\lambda_{'}^g  & = & \frac{\left< l \right>^g}{\left< l \right>_{US}^g} \\
\lambda_{''}^g & = & \frac{\left< l \right>^g - \left< l \right>_{US}^g}{\left< l \right>_{UL}^g - \left< l \right>_{US}^g},
\end{eqnarray*} 
where 
$\left< l \right>_{US}^g$ and $\left< l \right>_{UL}^g$ are the minimum (Ultra Short) and the maximum (Ultra Long) value of $\left< l \right>$ of a network with same number of vertices and edges. In a tree,
$\left< l \right>_{US}^g = \left< l \right>^{star}$ and $\left< l \right>_{UL}^g = \left< l \right>^{linear}$.

\begin{figure}
\centering
\includegraphics{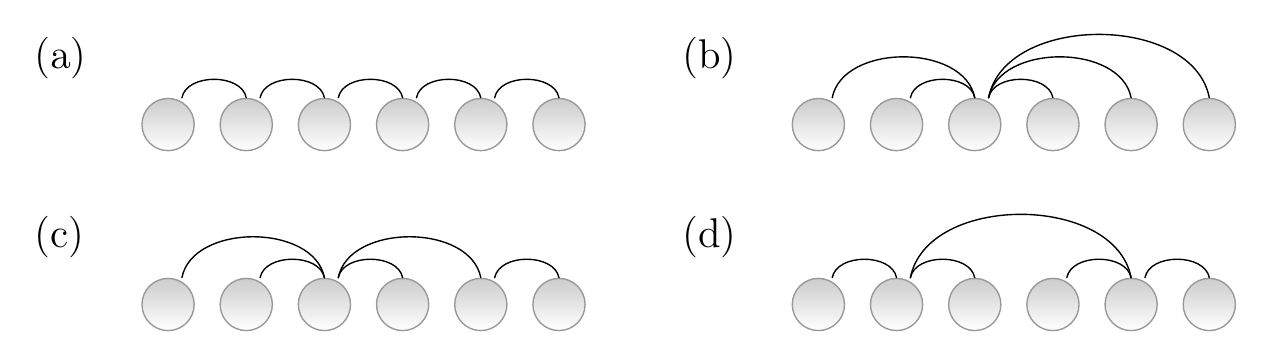} 
\caption{\label{various_trees_figure} Different trees with $n=6$ vertices and maximum degree $k_1$. (a) Linear tree ($k_1 = 2$). (b) Star tree ($k_1 = n - 1 = 5$). (c) Quasistar tree ($k_1 = n - 2 = 4$). (d) Balanced bistar tree ($k_1 = \lceil n/2 \rceil = 3$), that is formed by two star trees of 3 vertices joined by their respective hubs. }
\end{figure}

A limitation of $\left< l \right>$ is that $l_{ij} = \infty$ for vertices in different connected components and then $\left< l \right>$ is not finite in disconnected graphs, regardless of how closely connected vertices are within each component \cite{Latora2003a}. For this reason, an alternative is the so-called network {\em efficiency} \cite{Latora2001a}, an average of $1/l_{ij}$ that can be defined as 
\begin{equation*}
E^g = \frac{1}{{n \choose 2}} \sum_{i<j} \frac{1}{l_{ij}}.
\end{equation*} 
$1/E^g$ is a harmonic mean and $E$ is an average that is already normalized, since
\begin{equation*}
E^{empty} \leq E^g \leq E^{complete}.
\end{equation*}
$E^{empty} = 0$ and $E^{complete} = 1$ correspond to an empty network (a network with no edges) and a complete graph, respectively. $E^g$ is a normalized measure of optimality of a network with respect to topological distance. 

As for the situation of research on physical distance, $D^g$ has been compared against $D_{MST}^g$, the value of $D$ of a minimum spanning tree of the original graph, namely a tree $t$ defined on a subset of the edges of the original graphs such that $D^t$ is minimum \cite{Barthelemy2018a}. As $D_{MST}^g \leq D^g$, 
a normalized measure of physical distance cost is \cite{Barthelemy2018a}
\begin{equation*}
C^g = \frac{D^g}{D_{MST}^g},
\end{equation*}
namely a measure the degree of optimality of a network from the perspective of the topological distance. In that setup, the optimization problem involved in the calculation of $D_{MST}^g$ consists in keeping the number of vertices constant while allowing the network structure to vary. In contrast, we are interested in the variation of $D^g$ when the network structure remains constant, i.e. the limits of the variation of $D$ over the $n!$ linear arrangements.   

Given a network $g$, the calculation of $D_{min}^g$, the minimum value of $D^g$ over all linear arrangements is known as the minimum linear arrangement problem \cite{Diaz2002}, whereas the calculation of the maximum , i.e. $D_{max}^g$, is known as the maximum linear arrangement problem \cite{Hassin2001a}. In both problems, the network structure is fixed, as in the different linear arrangements in figure \ref{syntactic_dependency_tree_figure}.
Both problems are computationally hard \cite{Diaz2002,Hassin2001a}. In a tree $t$, the minimum linear arrangement problem simplifies and can be computed in polynomial time \cite{Shiloach1979,Esteban2015a,Chung1984} but still formulae for $D_{min}^t$ and $D_{max}^t$ are only available for specific trees \cite{Iordanskii1974a,Chung1984,Ferrer2013b}. 

A linear arrangement is planar if there are no edge crossings \cite{Frederickson88}.
Many real spatial networks arranged in two dimensions are planar or quasi planar \cite{Barthelemy2018a}. In one dimension, the concept of a planar linear arrangement has applications in areas like circuit layout \cite{Raghavan82} or dependency syntax \cite{GomNiv2013}.
In planar linear arrangements, the problem of the minimum $D^t$ simplifies further: $D_{min, P}^{t}$, the minimum value of $D$ over the all the planar ($P$) linear arrangements of a tree $t$, can be computed in linear time \cite{Iordanskii1987a, Hochberg2003a}. The first algorithm to calculate $D_{min, P}^{t}$ is due to Iordanskii \cite{Iordanskii1987a}, 16 years before Hochberg \& Stallmann's \cite{Hochberg2003a}. 

$D_{min}^t$ and $D_{max}^t$ and their limits of variation are relevant for research on the efficiency of language, where various optimality scores have been considered \cite{Ferrer2004b,Tily2010a,Gulordava2015}. The first optimality score for $D^t$ that was defined is \cite{Ferrer2004b,Tily2010a}
\begin{equation}
\Gamma^t = \frac{D^t}{D_{min}^t}.
\label{gamma_intro_equation}
\end{equation}
$\Gamma^t$ is the analog of the physical distance cost $C^g$ for research on $D^g$ where $g$ is a fixed tree $t$ and $D^t$ varies depending only on the linear arrangement.  
Another score that has been considered is \cite{Gulordava2015}
\begin{equation*}
\Delta^t = D^t - D_{min}^t.
\label{delta_intro_equation}
\end{equation*}
These limits are also relevant for a recently introduced $z$-scored value of $D$, i.e. \cite{Ferrer2018a}
\begin{equation}
D_z^t = \frac{D - D_{rla}}{(\V_{rla}^t)^{1/2}},
\label{z_scored_sum_of_edge_lengths_equation}
\end{equation}
where $D_{rla}$ and $\V_{rla}^t$ are, respectively, the expectation and the variance of $D^t$ in a uniformly random linear arrangement (r.l.a.). 
$D_{rla}$ depends only on $n$, as \cite{Ferrer2019a}
\begin{equation}
\E_{rla}[D^t] = \frac{1}{3}(n^2 - 1).
\label{sum_of_dependecy_lengths_random_equation}
\end{equation}
Table \ref{syntactic_dependecy_tree_table} shows the value of the different dependency length scores for the linear arrangements of the tree in figure \ref{syntactic_dependency_tree_figure}.
\begin{table}
\caption{\label{syntactic_dependecy_tree_table}
The statistical properties of the tree of figure \ref{syntactic_dependency_tree_figure} and its linear arrangements. The tree has $n = 7$ vertices, $n \DegreeSecondMoment = 26 $, $D_{rla} = 48/3$ (equation \ref{sum_of_dependecy_lengths_random_equation}) and  $\V_{rla}^t = 148/15$ (equation \ref{variance_of_sum_of_edge_lengths_tree_equation}). $\Delta^t$, $\Gamma^t$ and $D_z^t$ are calculated applying equations \ref{delta_intro_equation}, \ref{gamma_intro_equation} and \ref{z_scored_sum_of_edge_lengths_equation}.}
\begin{indented}
\item[]
\begin{tabular}{lrrr}
\br
           & (a) & (b) & (c) \\
\mr
$D^t$      & 10  & 8 & 20 \\
$\Delta^t$ & 2 & 0 & 10   \\
$\Gamma^t$ & $\frac{5}{4}$ & 1 & 2  \\
$D_z^t$    & $-3\sqrt{\frac{15}{37}}$ & $-4\sqrt{\frac{15}{37}}$ & $2\sqrt{\frac{15}{37}}$ \\
\br
\end{tabular}
\end{indented}
\end{table}

The remainder of the article is organized as follows. Section \ref{review_section} details the mathematical problems on $D_{min}^t$ and $D_{max}^t$ that are investigated in this article while reviewing previous results. In short, these problems are $D_{min}^t$ and $D_{max}^t$ in specific trees or classes of trees and the variation of $D_{min}^t$ and $D_{max}^t$ over all trees of the same size. Section \ref{D_max_section} investigates $D_{max}^t$. Section \ref{D_min_section} investigates $D_{min}^t$. 
Applying findings from the preceding sections, Section \ref{maxima_of_optimality_scores_section} investigates the limits of the variation of the optimality scores $\Delta^t$ and $\Gamma^t$ while Section \ref{zeta_score_section} investigates those of $D_z^t$.
Finally, Section \ref{discussion_section} reviews all our findings and suggests future research problems.

\section{Research problems and review}

\label{review_section}
\begin{table}
\caption{\label{hubiness_table} The distribution of vertex degrees on a tree $t$ of $n$ vertices, for specific trees. $k_1$ is the maximum vertex degree, and $\DegreeSecondMoment$, the second moment of degree about zero. $\DegreeSecondMoment$ for linear and star trees is borrowed from \cite{Ferrer2013b}. $\DegreeSecondMoment$ for quasistar trees is borrowed from \cite{Ferrer2014f} and that of balanced bistar trees is derived from equation \ref{degree_2nd_moment_bistar_equation} with $k_1 = \lceil n/2 \rceil$. }

\begin{indented}
\item[]
\begin{tabular}{lrr}
\br
$t$       & $k_1$ & $\DegreeSecondMoment$ \\
\mr
linear    & 2     & $\DegreeSecondMomentLinearTree$ \\
 & & \\
balanced bistar & $\left\lceil \frac{n}{2} \right\rceil$ & $\frac{2}{n}(\lceil n/2 \rceil(\lceil n/2 \rceil - n) - 1) + n + 1$ \\  
 & & \\
quasistar & $n - 2$ & $\DegreeSecondMomentQuasistarTree$ \\
 & & \\
star      & $n - 1$ & $\DegreeSecondMomentStarTree$ \\
 & & \\
\br
\end{tabular}
\end{indented}
\end{table}

Given the potential to obtain simple formulae for trees and the interest of trees in language research \cite{Liu2017a,Temperley2018a}, here we are interested in three kinds of problems over trees of $n$ vertices.

\subsection{$D_{min}^t$ and $D_{max}^t$ in specific trees}
\label{problem1}

We investigate $D_{min}^t$ and $D_{max}^t$ in specific kinds of trees (distinct unlabelled trees) that are selected for their theoretical interest. Linear trees and star trees are relevant to understand the variation of topological distance as we have seen above (equation \ref{variation_of_mean_topological_distance_equation}) \cite{Ferrer2003a} and also to understand the limits of the variation of $D_{min}^t$ \cite{Iordanskii1974a,Esteban2016a}. Trivially \cite{Chung1984, Ferrer2013b, Esteban2016a}, 
\begin{equation*}
D_{min}^{linear} = n - 1.
\end{equation*}
Iordanskii found that \cite{Iordanskii1974a}, 
\begin{equation*}
D_{min}^{star} = \DminStarTreeFloor,
\end{equation*}
which was rediscovered later in equivalent forms \cite{Ferrer2013b, Esteban2016a}, e.g. 
\begin{equation*}
D_{min}^{star} = \DminStarTreeMod.
\end{equation*}
Bistar trees ({\em bistar}) consist of two stars joined by the hub and include star trees 
as an extreme case when one of the stars has only one vertex (figures \ref{various_trees_figure} (b-d)) \cite{San_Diego2014a, Vaidya2018a}. Here we are interested in two distinct representatives: quasistar trees ({\em quasi}), where one of the original stars has only two vertices (figure \ref{various_trees_figure} (c)) and balanced bistar trees ({\em b-bistar}), where the two original stars have the same size or differ in one vertex (figure \ref{various_trees_figure} (d)). 
Quasistar trees are important for the theory of edge crossings in linear arrangements \cite{Ferrer2014f,Alemany2018a}.
In this article, we will unveil that balanced bistar trees maximize $D_{max}^t$ over trees of $n$ vertices.
We will also obtain formulae for $D_{min}^{quasi}$ and $D_{min}^{b-bistar}$.

It has been shown that \cite{Ferrer2013b}
\begin{equation*}
D_{max}^{star} = \DmaxStarTree.
\end{equation*} 
$D_{max}^{linear}$ is unknown but $D_{max, P}^{t}$, the maximum value of $D$ over the all the planar ($P$) linear arrangements of a tree $t$, has been investigated. 
It has been shown that \cite{Ferrer2013b}
\begin{equation*}
D_{max}^{star} = D_{max,P}^{linear} = \DmaxStarTree.
\end{equation*}
Notice that edge crossings are impossible in a star tree \cite{Ferrer2013b} and hence $D_{max}^{star} = D_{max,P}^{star}$.
Here we will calculate $D_{max}^{linear}$ as well as $D_{max}^{quasi}$ and $D_{max}^{b-bistar}$.  

\subsection{$D_{min}^t$ and $D_{max}^t$ in classes of trees}
\label{problem2}

We investigate $D_{min}^t$ and $D_{max}^t$ in classes of trees (comprising more than one distinct unlabelled tree but not all distinct labelled trees). Two classes are selected for their theoretical interest: bistar trees (for the reasons explained above) and caterpillar trees ({\em cat}). Caterpillar trees is the class of trees such that when all the leaves are removed a linear tree is left \cite{Rosen2017a}. Caterpillar trees are relevant for being a generalization of linear trees and bistar trees of enough simplicity that simple formulae for $D_{min}^t$ can be obtained \cite{Horton1997a}.  
For each relevant class, we aim to express $D_{min}^t$ and $D_{max}^t$ as a function of $n$ and additional parameters of the networks extracted from vertex degrees: e.g., $k_1$, the largest degree, or $\DegreeSecondMoment$, the second moment of degree about zero. 

\subsection{The variation of $D_{min}^t$ and $D_{max}^t$ over all trees of the same size.}
\label{problem3}

We investigate the variation of $D_{min}^t$ and $D_{max}^t$ over all distinct unlabelled trees of $n$ vertices. The problem is motivated by research on $D^t$ as a function of $n$ \cite{Ferrer2004b,Park2009a,Ferrer2013c,Futrell2015a}.
It is well-known that any tree $t$ of $n$ vertices satisfies \cite{Esteban2016a}
\begin{equation}
D_{min}^{linear} \leq D_{min}^t \leq D_{min}^{star} \leq D_{rla}.
\label{Dmin_chain_equation}
\end{equation} 
The part $D_{min}^t \leq D_{min}^{star}$ is due to Iordanskii \cite{Iordanskii1974a} although rediscovered later \cite{Esteban2016a}. 
Asymptotically ($n \rightarrow \infty$), one also has that \cite{Iordanskii1987a}, 
\begin{equation*}
D_{min,P}^t \leq \frac{3}{2}D_{min}^t.
\end{equation*}
An inequality equivalent to equation \ref{Dmin_chain_equation} for $D_{max}^t$ is not forthcoming but it has been shown that any tree $t$ of $n$ vertices satisfies \cite{Ferrer2013b}
\begin{equation*}
D_{max,P}^t \leq D_{max,P}^{linear} = D_{max}^{star}.
\end{equation*}
Here we will show that any tree $t$ of $n$ vertices also satisfies
\begin{equation}
D_{rla} \leq D_{max}^{star} \leq D_{max}^t \leq D_{max}^{b-bistar}. 
\label{Dmax_chain_equation}
\end{equation}

The hubiness of a tree is defined by $\DegreeSecondMoment$, the second moment of degree about zero \cite{Ferrer2013b,Ferrer2017a} (Table \ref{hubiness_table}). $\DegreeSecondMoment$ and $k_1$, the maximum vertex degree are closely related for sufficiently large $k_1$.
Table \ref{summary_table} summarizes all the existing results and the new results that are presented in this article for the problems defined in Sections \ref{problem1} and \ref{problem2}.

\begin{table}
\caption{\label{summary_table} 
$D_{min}^t$ and $D_{max}^t$, the minimum and the maximum value of $D$, the sum of edge lengths of a tree $t$ for different classes of trees. Classes are sorted from more general to more concrete. Classes formed by a single tree are sorted increasingly by their hubiness (Table \ref{hubiness_table}). $n$ is the number of vertices of the tree and $k_1$ is the largest degree, $q$ is the number of vertices of odd degree, $q' = {k_1 \bmod 2} + {(n - k_1) \bmod 2}$ and $\phi = \DminBalancedBistarTreeModDecrementCarlos$. For $D_{min}^t$ we provide at least two formulae: one based on the floor or ceil function and the other based on ${\bmod}$ (the exception are linear trees due to the simplicity of their formula). Formulae without a reference attached are new to our knowledge. }	

\begin{indented}
\item[]
\begin{tabular}{lrr}
\br
$t$       & $D_{min}^t$ & $D_{max}^t$ \\
\mr
 & & \\
caterpillar & $\DminCaterpillarTreeFloorHorton$ \cite{Horton1997a} & \\
            & $\DminCaterpillarTreeFloor$ & \\ 
            & $\DminCaterpillarTreeMod$ & \\
 & & \\
bistar    & $\DminBistarTreeFloor$ & $\DmaxBistarTree$ \\
          & $\DminBistarTreeMod$ & \\ 
 & & \\
linear    & $\DminLinearTree$ \cite{Chung1984} & $\DmaxLinearTreeFloor$ \\
          &                   & $\DmaxLinearTreeMod$ \\
 & & \\
balanced bistar & $\DminBalancedBistarTreeFloor$ & $\DmaxBalancedBistarTree$ \\
                & $\DminBalancedBistarTreeMod$ & \\ 
 & & \\
quasistar & $\DminQuasistarTreeFloor$ & $\DmaxQuasistarTree$ \\
          & $\DminQuasistarTreeMod$ & \\
 & & \\
star      & $\DminStarTreeFloor$ \cite{Iordanskii1974a} & $\DmaxStarTree$ \cite{Ferrer2013b} \\
          & $\DminStarTreeMod$ \cite{Esteban2016a} & \\
 & & \\
\br
\end{tabular}
\end{indented}
\end{table}

\section{The maximum value of $D$}

\label{D_max_section}

Here we investigate $D_{max}^t$ in linear trees and bistar trees as well as the limits of the variation of $D_{max}^t$ over all trees of $n$ vertices. 

\begin{figure}
\centering
\includegraphics[scale = 0.8]{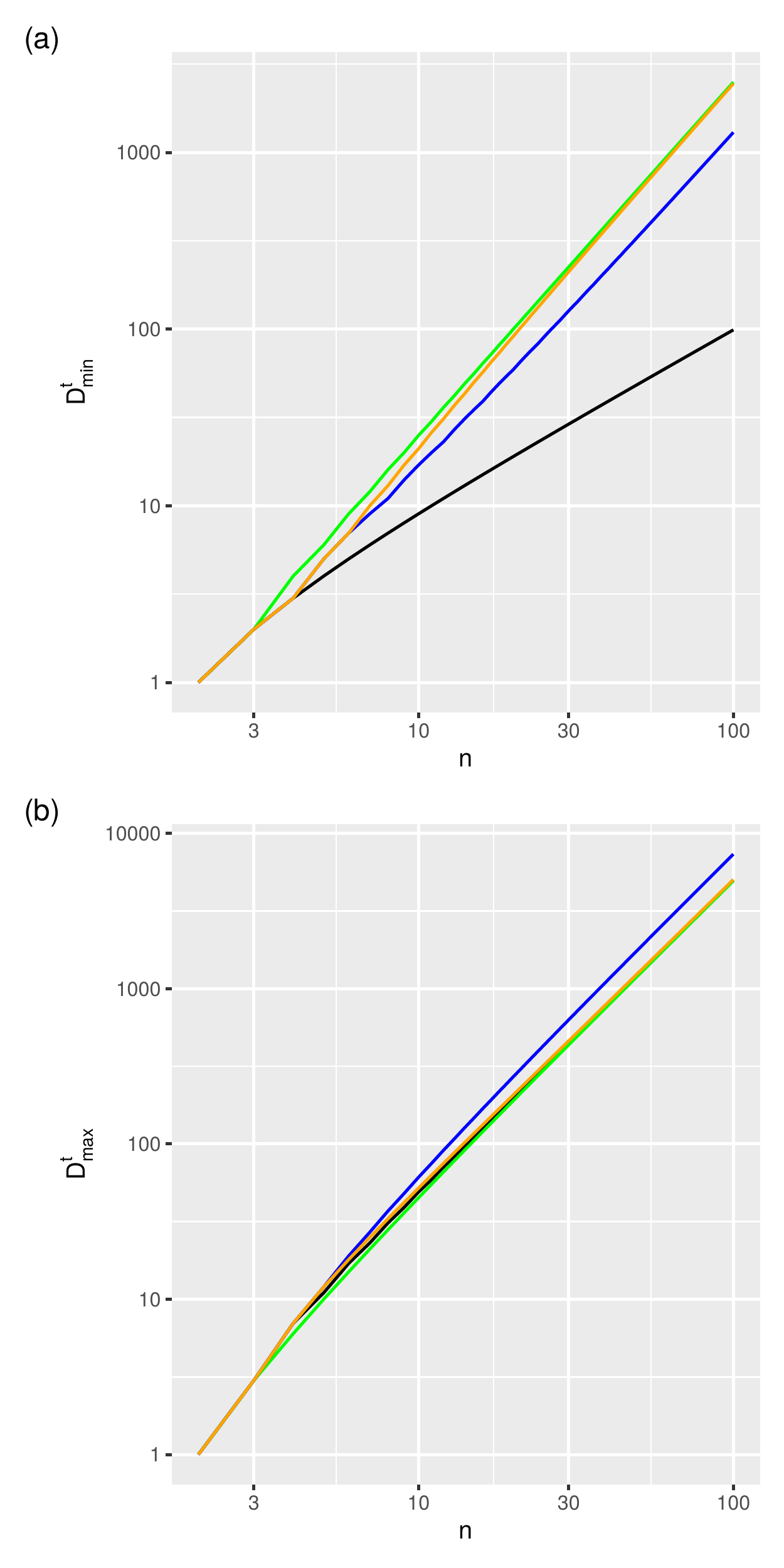}
\caption{\label{scaling_of_D_min_and_D_max_figure} The scaling of the limits of the variation of $D^t$ as a function of $n$, the number of vertices of the tree $t$, for different trees: linear trees (black), balanced bistar trees (blue), quasistar trees (orange) and star trees (green). (a) $D_{min}^{t}$. (b) $D_{max}^{t}$. }
\end{figure}

\subsection{Linear trees}



A linear tree is a tree whose vertices are linked as a chain, i.e., a tree with arcs of the form $\{v_1,v_2\}, \{v_2,v_3\}, \ldots, \{v_{n-1},v_n\}$.
The maximum value of $D^{linear}$ over the $n!$ linear arrangements is (\ref{D_max_linear_trees_appendix})
\begin{eqnarray}
D_{max}^{linear} & = & \DmaxLinearTreeMod \nonumber \\ 
                 & = & \DmaxLinearTreeFloor. \label{maximum_linear_arrangement_of_linear_tree_equation}
\end{eqnarray}

\subsection{Bistar trees}

Hereafter we assume that a vertex is labelled with its position in the degree sequence, namely the non-increasing sequence of vertex degrees. Then $k_i$ is the degree of the vertex with the $i$-th largest degree.  
A bistar tree is a generalization of trees of high theoretical interest: star trees \cite{Iordanskii1974a,Esteban2016a} and quasistar trees \cite{Ferrer2014f,Alemany2018a}. If $k_1 = n-1$ (hence $k_2 = 1$) then we have a star tree. If $k_1 = n - 2$ (hence $k_2 = 2$) then we have a quasistar tree (figures \ref{various_trees_figure} (b-d)). Since a bistar tree consists of two joined stars, one may think that a bistar tree has three parameters, $n$, $k_1$ and $k_2$. However, $n$ and $k_1$ suffice, as we will see next.

A bistar tree with $n \geq 2$ vertices satisfies the following properties:
\begin{enumerate}
\item
It has at most two internal vertices, more precisely 
$2 - \delta_{k_1, 1} - \delta_{k_2, 1}$
internal vertices, where $\delta$ is the Kronecker delta function. $\delta_{k_2, 1} = 1$ when the tree is a star.	
\item
It has $n - 2 + \delta_{k_1, 1} + \delta_{k_2, 1}$ leaves.
\item
Then $k_i = 1$ for $3 - \delta_{k_1, 1} - \delta_{k_2, 1} \leq i \leq n$.
\item
\begin{equation}
k_2 = n - k_1
\label{2nd_largest_degree_equation}
\end{equation}
because the sum of vertex degrees must satisfy
\begin{equation*} 
k_1 + k_2 + n - 2 = 2(n-1)
\end{equation*}
by the handshaking lemma. 
\item
\begin{eqnarray}
\DegreeSecondMoment & = & \frac{1}{n}(k_1^2 + (n-k_1)^2 + n - 2) \nonumber \\
                   & = & \frac{2}{n}(k_1(k_1 - n) - 1) + n + 1 \label{degree_2nd_moment_bistar_equation}
\end{eqnarray}
\item
\begin{equation}
k_1 \geq \LargestDegreeBalancedBistarTree.
\label{lower_bound_of_largest_degree_equation}
\end{equation} 
Combining equation \ref{2nd_largest_degree_equation} with the condition $k_1 \geq k_2$, one obtains
\begin{equation*}
k_1 \geq \frac{n}{2},
\end{equation*}
which knowing that $k_1$ is an integer gives equation \ref{lower_bound_of_largest_degree_equation}.
\end{enumerate}
Our definition of a star tree with two parameters, $n$ and $k_1$, is equivalent to other two-parameter definitions. \cite{San_Diego2014a} defines a bistar with two parameters $n_1$ and $n_2$. The bistar is formed by taking a graph with a single edge and two vertices and adding $n_1$ edges at one end of the edge and $n_2$ edges at the other end. Ours is then $n = 2 + n_1 + n_2$ and $k_1 = \max(n_1, n_2) + 1$.   
\cite{Vaidya2018a} defines a bistar with two parameters $n_1'$ and $n_2'$. The bistar is formed by adding an edge between the hubs of two stars of $n_1'$ and $n_2'$ vertices respectively \cite{Vaidya2018a}. Ours is then $n = n_1' + n_2'$ and $k_1 = \max(n_1', n_2')$.   
The term bistar tree has also been used to refer to a tree with only one inner edge or a tree of diameter three, where the diameter is the length of the longest shortest path in edges \cite{Dragan2006a}. This is not exactly our definition of bistar because it excludes star trees and implies $n \geq 4$. In our definition, a bistar tree has at most one inner edge and diameter at most 3 and is then valid for $n < 4$. 

We introduce a bistar tree of great theoretical importance to calculate the maximum of $D_{max}^t$ over all trees of same size: the balanced bistar tree (figure \ref{bistar_figure}). That tree is a bistar tree with 
\begin{equation} 
k_1 = \LargestDegreeBalancedBistarTree.
\label{maximum_degree_balanced_bistartree_equation}
\end{equation}
The latter implies that $k_2 = \SecondLargestDegreeBalancedBistarTree$ thanks to equation \ref{2nd_largest_degree_equation}.
The term balanced comes from the fact a balanced bistar tree is a bistar tree where the difference $k_1 - k_2$ is minimized. Thanks to equation \ref{2nd_largest_degree_equation}, one has that
\begin{equation*}
k_1 - k_2 = 2 k_1 - n.
\end{equation*}
The fact that $k_1 \geq k_2$, gives that the difference is minimized when $k_1$ satisfies equation \ref{maximum_degree_balanced_bistartree_equation}. 

  
In a bistar tree $t$, the maximum value of $D_{t}$ over all $n!$ linear arrangements is (\ref{D_max_bistar_tree_appendix}) 
\begin{equation*}
D_{max}^{bistar} = k_1 (n - k_1) + \frac{n}{2}(n - 3) + 1.
\end{equation*}
Table \ref{summary_table} summarizes $D_{max}^{t}$ for specific bistar trees: balanced bistar trees, quasistar trees and star trees (the formulae are derived in \ref{D_max_bistar_tree_appendix}). 

\subsection{The maximum $D_{max}^t$}

In a graph $g$ of $n$ vertices and $m$ edges, an obvious upper bound of $D_{max}^t$ is \cite{Ferrer2018a}
\begin{equation*} 
D_{upper, naive}^{g} = m(n-1),
\end{equation*}
where $n-1$ is the maximum length of an edge. 
{\em A priori}, $n - d$ edges of length $d$ can be formed. Taking $m$ lengths as long as possible, the analog of Petit's edge method (EM) for the maximum linear arrangement problem \cite{Petit2003a}, gave another upper bound of $D_{max}^t$ \cite{Ferrer2018a} that is
\begin{eqnarray*}
D_{upper,EM}^g 
   & = & (m - F(d_*))(d_* - 1) \\
   &   & + \frac{1}{6}(n- d_*) (n^2 + (n+3)d_* - 2d_*^2 - 1),
\end{eqnarray*}
where
\begin{eqnarray*}
F(d_0) & = & \frac{1}{2}(n - d_0)(n - d_0 + 1) \\ 
d_*    & = & \left\lceil n + \frac{1}{2} - \frac{1}{2}\sqrt{8m + 1} \right\rceil. 
\end{eqnarray*}
Figure \ref{balanced_bistar_versus_Petits_methods_figure} shows that, when $m = n - 1$ as in a tree, the naive upper bound, i.e.  $(n-1)^2$, beats the edge method upper bound for sufficiently large $n$. This is likely to be due to the tree constraint (acyclicity and connectedness). Interestingly, the naive upper bound is close to the true maximum of $D_{max}^t$, that is achieved by a     
maximum linear arrangement of a balanced bistar tree as we will show next.

\begin{thm}[Maximum $D_{max}^t$]
\label{maximum_D_max_theorem}
For any tree $t$ of $n$ vertices,
\begin{equation*}
D_{max}^t \leq D_{max}^{b-bistar} = \DmaxBalancedBistarTree.
\end{equation*}
\end{thm}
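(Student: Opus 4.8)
The plan is to bound $D^t$ for an \emph{arbitrary} tree and an \emph{arbitrary} arrangement simultaneously, by a quantity depending only on the degrees and their positions, and then to optimize that quantity in closed form. Fix a linear arrangement and identify each vertex $v$ with its position $\pi(v)\in\{1,\dots,n\}$, and write $c=(n+1)/2$ for the central coordinate. The key step I would use is a one-line triangle inequality obtained by recentering: for every edge $\{u,v\}$,
\[
|\pi(u)-\pi(v)| = |(\pi(u)-c)-(\pi(v)-c)| \le |\pi(u)-c| + |\pi(v)-c|,
\]
so that each edge length is dominated by the sum of two per-endpoint contributions measured from the center.

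Summing this over all $n-1$ edges and grouping the two contributions belonging to each vertex gives the degree-weighted bound
\[
D^t \le \sum_{v} \deg(v)\,|\pi(v)-c| = \sum_{i=1}^{n} d_i\,|i-c|,
\]
where $d_i$ denotes the degree of the vertex placed at position $i$. Since $(d_i)$ is a tree degree sequence we have $d_i\ge 1$ and $\sum_i d_i = 2(n-1)$, hence $\sum_i (d_i-1)=n-2$. Because every position satisfies $|i-c|\le (n-1)/2$, the ``excess degree'' part is controlled uniformly, independently of the tree:
\[
\sum_{i=1}^n (d_i-1)\,|i-c| \le \frac{n-1}{2}\sum_{i=1}^n (d_i-1) = \frac{(n-1)(n-2)}{2}.
\]

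It then remains to evaluate the baseline term $S_0:=\sum_{i=1}^n |i-c|$; a short computation, splitting the cases $n$ even and $n$ odd, gives $S_0=\lfloor n^2/4\rfloor$. Adding the two pieces yields $D^t \le \lfloor n^2/4\rfloor + (n-1)(n-2)/2$ for every tree $t$ and every arrangement, so this also bounds $D_{max}^t$. I would then check that this expression collapses exactly to $\DmaxBalancedBistarTree$ for both parities. To confirm the bound is attained (so the right-hand side is genuinely $D_{max}^{b-bistar}$ and not a strict over-estimate), I would note that both inequalities are tight precisely when all excess degree sits at positions $1$ and $n$ — the only positions with $|i-c|=(n-1)/2$ — and every edge joins the two halves separated by $c$; placing the two hubs of the balanced bistar at positions $1$ and $n$ with its leaves in between realizes exactly this configuration, matching the value of $D_{max}^{bistar}$ at $k_1=\lceil n/2\rceil$ already computed for bistar trees.

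The conceptual heart, and the step I expect to be the real obstacle, is finding the right potential: recentering at $c=(n+1)/2$ is what decouples edge lengths into vertex terms subject to a \emph{uniform} cap $(n-1)/2$, after which the argument is just a rearrangement bound requiring no tree-realizability analysis. Once that idea is in place, the only delicate work is the final bookkeeping — verifying that $\lfloor n^2/4\rfloor + (n-1)(n-2)/2$ equals $\DmaxBalancedBistarTree$ for both parities of $n$, and checking the equality conditions carefully so that the bound is certified as tight rather than merely valid.
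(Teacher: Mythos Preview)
Your argument is correct and genuinely different from the paper's. The paper roots each labelled tree at position $1$, observes that every non-root vertex $i$ has exactly one incoming arc whose length is at most $\max(i-1,n-i)$, and then notes that the balanced bistar (with hubs at positions $1$ and $n$) realizes all of these individual arc-length maxima simultaneously. So the paper bounds $D$ as a sum of $n-1$ per-vertex terms indexed by \emph{children}, using the tree's parent structure, and the optimizer falls out constructively.

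Your route instead bounds $D$ by the degree-weighted potential $\sum_i d_i\,|i-c|$ via the triangle inequality, then exploits only the handshaking constraint $\sum_i(d_i-1)=n-2$ together with the uniform cap $|i-c|\le(n-1)/2$. This is more analytic and, in a sense, more robust: the inequality $D\le\sum_v\deg(v)\,|\pi(v)-c|$ holds for \emph{any} graph, and the tree hypothesis enters only through the degree sum. The price is that tightness must be checked separately, whereas the paper's argument hands you the maximizer directly. Your equality analysis is right: forcing $(d_i-1)|i-c|$ to hit its cap pins all excess degree to positions $1$ and $n$ (hence a bistar), and then forcing every edge to straddle $c$ pins $k_1=\lceil n/2\rceil$; invoking the already-derived $D_{max}^{bistar}$ at that value closes the loop cleanly. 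The final bookkeeping $\lfloor n^2/4\rfloor+(n-1)(n-2)/2=\frac{1}{4}\bigl(3(n-1)^2+1-n\bmod 2\bigr)$ checks out in both parities.
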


\begin{proof}
Let $\tau$ be the set of all unlabelled trees of $n$ vertices.
Let $\upsilon$ be the set of labelled trees of $n$ vertices, i.e., the set of trees of $n$ vertices where each vertex has been assigned a unique number in $\{1, 2, \ldots, n\}$ that indicates its position in the linear arrangement. 
Given an unlabelled tree $t \in \tau$, choosing a linear arrangement for it (by assigning a linear order to its vertices) results into one of the trees in $\upsilon$. Thus, maximizing the value of $D^t$ across the $n!$ possible linear arrangements of each unlabelled tree in $\tau$ reduces to maximizing the value of $D$ in $\upsilon$.\footnote{Note that the mapping from linear arrangements to trees in $\upsilon$ is not bijective (different linear arrangements can result into the same labelled tree, e.g. all the linear arrangements of a star tree where the central vertex's position is kept constant) but this is not relevant for this proof, as it does not affect $D$.}

Let $\varphi_1$ be the set of directed rooted trees obtained by rooting each tree in $\upsilon$ at its vertex $1$, the 1st vertex in the linear arrangement, and directing all edges to point away from the root. Trivially, this mapping between $\varphi_1$ and $\upsilon$ is bijective (as said orientation is unique) and it preserves the sum of edge lengths. Therefore, if we find a directed tree with maximum sum of arc lengths in $\varphi_1$, its underlying undirected tree will have the maximum sum of edge lengths in $\upsilon$.

We will show that the directed tree with arcs $n \rightarrow 2, n \rightarrow 3, \ldots, n \rightarrow \lfloor \frac{n}{2} \rfloor, 1 \rightarrow \lfloor \frac{n}{2} \rfloor +1, \ldots, 1 \rightarrow n-1, 1 \rightarrow n$, whose underlying undirected tree is a balanced bistar tree as in figure \ref{bistar_figure}, maximizes the sum of arc lengths in $\varphi_1$. To see this, we use the property of directed trees that every vertex has exactly one incoming arc, except for the root which has none. Thus, for any tree of $\varphi_1$, we can write its arcs as $A_2, A_3, \ldots, A_n$ such that $A_i$ is the arc going into vertex $i$. Now, if we consider each arc individually, we can say that

\begin{itemize}
\item The length of the arc $A_i$, for $2 \le i \le \lfloor \frac{n}{2} \rfloor$, is at most $n-i$, as $n$ is the farthest possible vertex from vertex $i$. That is, the arc $n \rightarrow i$ is the longest possible arc to vertex $i$.
\item The length of the arc $A_i$, for $\lfloor \frac{n}{2} \rfloor < i \le n$, is at most $i-1$, as $1$ is the farthest possible vertex from vertex $i$. That is, the arc $1 \rightarrow i$ is the longest possible arc to vertex $i$.
\end{itemize}

The directed tree mentioned above has exactly the longest arcs going into each vertex $i$, for $2 \le i \le n$. Thus, it maximizes the sum of arc lengths in $\varphi_1$, as it maximizes each individual length $A_2, \ldots, A_n$. Therefore, the underlying balanced bistar tree has the maximum sum of edge lengths in $\upsilon$, proving the theorem. Note: the ordering of the vertices implied by such directed tree corresponds to an extreme linear arrangement as described above.  
\end{proof}

\begin{figure}
\centering
\includegraphics[scale = 0.7]{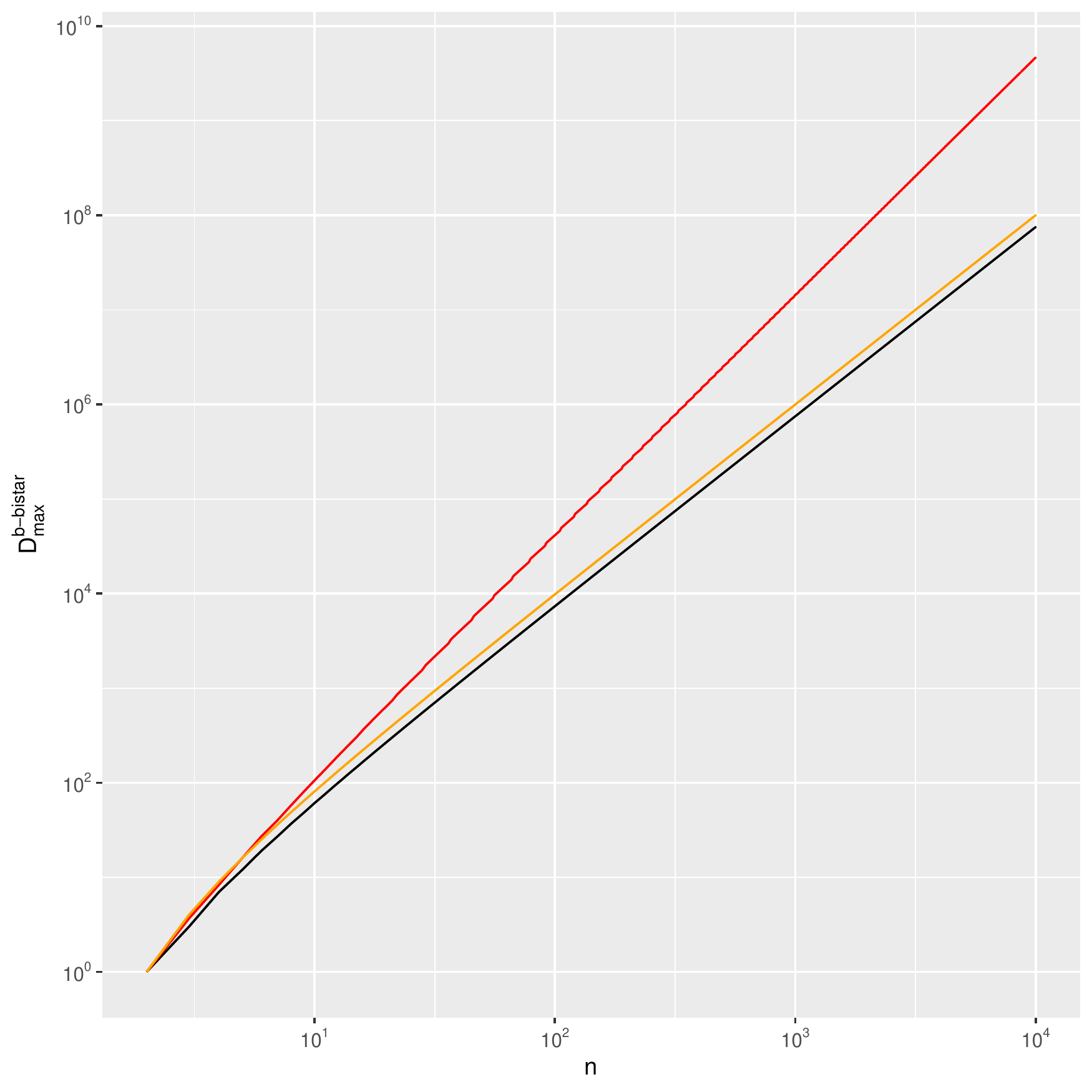} 
\caption{\label{balanced_bistar_versus_Petits_methods_figure} $D_{max}^{b-bistar}$, the true maximum of $D_{max}^t$, as a function of $n$ (black). The predictions of the naive upper bound (orange) and the edge method upper bound (red) are also shown. }
\end{figure}

The problem of maximizing $D_{max}^t$ is equivalent to finding the maximum spanning tree on a complete graph where the weight of every edge is the distance between the vertices that form it in the linear arrangement (\ref{D_max_appendix}). The argument provides an alternative way to demonstrate Theorem \ref{maximum_D_max_theorem}. 

\subsection{Lower bounds of $D_{max}^t$}

A star tree yields the minimum value of $D_{max}^t$, namely $D_{max}^t \ge D_{max}^{star}$ (\ref{minimum_D_max_appendix}). In turn,
by definition of average and maximum, $D_{rla} \leq D_{max}^t$, and particularizing this for a star tree, $D_{rla} \leq D_{max}^{star}$. Putting these results together, we have that $D_{rla} \leq D_{max}^{star} \leq D_{max}^t$. 





Figure \ref{scaling_of_D_min_and_D_max_figure} (b) shows the growth of $D_{max}^t$ for different trees.

\section{The minimum value of $D$}

\label{D_min_section}

$D_{min}^{t}$ is the minimum value of $D$ over the all the linear arrangements of a tree $t$. $D_{min}^t$ can be calculated with rather complex algorithms for any tree $t$ \cite{Chung1984,Shiloach1979,Esteban2015a}. Algorithms to calculate $D_{min}^{t}$ satisfying a certain constraint are also available but less known. See \cite{Iordanskii1987a,Hochberg2003a} for planarity (no edge crossings) and \cite{Gildea2007a,Park2009a} for projectivity, a particular case of planarity. Here we are interested in compact formulae for $D_{min}^t$ for certain classes of trees or general lower bounds.

In his pioneering research, Iordanskii, in addition to showing that $D_{min}^t \leq D_{min}^{star}$ \cite{Iordanskii1974a}, also showed that the maximum value of $D_{min}^t$ over all trees of $n$ vertices with maximum degree $k_1$ such that $k_1 \geq 3$ satisfies the following inequalities \cite{Iordanskii1975a}
\begin{equation*}
\frac{k_1- \frac{7}{3}}{12 \log_2 k_1}n\log_2 n\left(1 - \frac{\log_2 \frac{3}{2}}{\log_2 n} \right) < \max_{t} D_{min}^t < \frac{k_1 + 4}{4\log_2(k_1-1)}n \log_2 n.
\end{equation*}
The asymptotic order in $n$ of the expression for the lower and upper bound is
\begin{equation}
\frac{k_1 n \log_2 n}{\log_2 k_1}
\end{equation}  
with a constant factor
\begin{equation*}
c_1 = \left\{
         \begin{array}{ll}
         \frac{1}{54} \approx 0.0185 & \mbox{for~} k_1 \geq 3 \\
         \frac{1}{12} \approx 0.0833             & \mbox{for~} k_1 \rightarrow \infty  
         \end{array}  
      \right.
\end{equation*}
for the lower bound 
and a constant factor
\begin{equation*}
c_2 = \left\{
         \begin{array}{ll}
         \frac{7}{12} \log_2 3 \approx 0.9244 & \mbox{for~} k_1 \geq 3 \\
         0.25                                 & \mbox{for~} k_1 \rightarrow \infty  
         \end{array}  
      \right.
\end{equation*}
for the upper bound. 

\cite{Petit2003a} reviews various techniques to obtain lower bounds of $D^t$. In a network with $m$ edges, the edge method consists of picking the $m$ shortest edges noting that there can be at most $n - d$ edges of length $d$, for $1 \leq d \leq n-1$. In a tree, this methods trivially gives $D_{min} \geq D_{min}^{linear} = n - 1$. The next theorem presents a lower bound of $D_{min}^t$ that depends exclusively on the degree sequence and that is obtained with the degree method. A similar application of the degree method can be found in \cite{Ferrer2013b}.

\begin{thm}
For any tree $t$ of $n$ vertices,
\begin{equation*}
D_{min}^t \geq D_0^t = \frac{1}{4} \left(\frac{n}{2} \DegreeSecondMoment + 2(n-1) + \frac{1}{2}q\right)
\end{equation*}
where
\begin{equation*}
q = \sum_{i=1}^n (k_i \bmod 2)
\end{equation*}
is the number of vertices of odd degree.
\end{thm}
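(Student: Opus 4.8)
The plan is to use a purely local (per-vertex) argument: bound from below the total length of the edges incident to each vertex, and then add these bounds over all vertices. Since $D_{min}^t \le D^t$ for every linear arrangement, it suffices to establish the inequality for an arbitrary arrangement, which removes the need to reason about any optimal arrangement directly.

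First I would fix a linear arrangement and consider an arbitrary vertex $v$ of degree $k_v$ placed at some position $p$. Its $k_v$ neighbours occupy $k_v$ distinct positions, all different from $p$; say $a$ of them lie to the left of $p$ and $b = k_v - a$ to the right. The distances from $v$ to its left neighbours are distinct positive integers, so they sum to at least $1 + 2 + \cdots + a = a(a+1)/2$, and likewise the right distances sum to at least $b(b+1)/2$. Hence the total length of the edges incident to $v$ is at least $a(a+1)/2 + b(b+1)/2$.

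Next I would minimise $a(a+1)/2 + b(b+1)/2$ subject to $a + b = k_v$. Rewriting it as $\frac{1}{2}(a^2 + b^2 + k_v)$ and using that $a^2 + b^2$ is minimised by the most balanced split, $a = \lfloor k_v/2 \rfloor$, $b = \lceil k_v/2 \rceil$, yields a per-vertex lower bound of $\frac{1}{4}(k_v^2 + (k_v \bmod 2)) + \frac{k_v}{2}$; the correction term $k_v \bmod 2$ is exactly what is needed to absorb the case of odd $k_v$, for which a perfectly even split is impossible. Packaging the even and odd cases into the single $\bmod$ term is the step that requires the most care.

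Finally I would sum the per-vertex bounds over all $n$ vertices. Each edge is counted exactly twice (once per endpoint), so $2D^t \ge \sum_v \left[ \frac{1}{4}(k_v^2 + (k_v \bmod 2)) + \frac{k_v}{2}\right]$. Substituting $\sum_v k_v^2 = n\DegreeSecondMoment$, $\sum_v (k_v \bmod 2) = q$, and $\sum_v k_v = 2(n-1)$ (the handshaking lemma for a tree with $n-1$ edges), the right-hand side collapses to $\frac{1}{4} n \DegreeSecondMoment + \frac{1}{4} q + (n-1)$, and dividing by $2$ reproduces $D_0^t$ exactly. The one conceptual point worth stating explicitly is that adding independent per-vertex lower bounds is legitimate even though their individual minima cannot in general be attained at the same time: each inequality holds verbatim for the given arrangement, so their sum does as well, and the resulting bound therefore applies to $D_{min}^t$.
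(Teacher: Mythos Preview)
Your proof is correct and follows essentially the same route as the paper: both use the per-vertex star decomposition $D^t = \tfrac{1}{2}\sum_i D_i^t$, bound each $D_i^t$ below by the value obtained from the most balanced left/right split of the $k_i$ neighbours, and then sum using the handshaking lemma. Your derivation of the per-vertex bound via the explicit $a$/$b$ split is slightly more detailed than the paper's, but the argument and the resulting expression $\tfrac{1}{4}(k_v^2 + (k_v \bmod 2)) + \tfrac{k_v}{2}$ coincide with the paper's $D_{i,\min}^t$.
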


\begin{proof}
Let $D_i^t$ be the sum of the length of the edges attached to the $i$-th vertex of $t$.
The degree method is based on a star tree decomposition of $D$ in a network, whereby \cite{Petit2003a}
\begin{equation}
D^t = \frac{1}{2} \sum_{i=1}^n D_i^t.
\label{star_decomposition_equation}
\end{equation}
$D_{i,min}^t$ is a lower bound of $D_i^t$ defined as \cite{Petit2003a}
\begin{equation*}
D_{i,min}^t = 2 \sum_{j=1}^{k_i/2} j = \frac{1}{2}\left(\frac{k_i^2}{2} + k_i\right)
\end{equation*}
if $k_i$ is even,
and
\begin{equation*}
D_{i,min}^t = \frac{1}{2}(k_i + 1) + 2\sum_{j=1}^{(k_i - 1)/2} j = \frac{1}{2}\left(\frac{k_i^2}{2} + k_i + \frac{1}{2}\right)
\end{equation*}
if $k_i$ is odd.
Combining both results, one obtains
\begin{equation*}
D_{i, min}^t = \frac{1}{2}\left(\frac{k_i^2}{2} + k_i + \frac{1}{2}(k_i \bmod 2)\right).
\end{equation*}
Inserting the last result into equation \ref{star_decomposition_equation}, one obtains
\begin{eqnarray*}
D_{min}^t & \geq & \frac{1}{2} \sum_{i=1}^n D_{i,min} \\
          & =    & \frac{1}{4} \sum_{i=1}^n \left(\frac{k_i^2}{2} + k_i + \frac{1}{2}(k_i \bmod 2)\right) \\
          & =    & \frac{1}{4} \left(\frac{n}{2} \DegreeSecondMoment + 2(n-1) + \frac{1}{2}q\right) \\
          & =    & D_0^t.
\end{eqnarray*}
\end{proof}

In a linear tree, $\DegreeSecondMoment = 4 - 6/n$ \cite{Ferrer2013b} and $q = 2$ give $D_0 = n - 1$, matching $D_{min}^{linear}$. In contrast, for a star tree, $\DegreeSecondMoment = n - 1$ \cite{Ferrer2013b} and $q = n - 1 + {(n - 1) \bmod 2}$ give 
\begin{equation}
D_0 = \frac{1}{8}\left(n^2 + 4n - 5 + {(n-1) \bmod 2}\right) \label{D_min_lower_bound_degree_method_equation}
\end{equation}
while $D_{min}^{star} = \DminStarTreeMod$ \cite{Iordanskii1974a}. Asymptotically, $D_0$ deviates from the true minimum, $D_{min}^{star}$, by a factor of $1/2$. 

\subsection{Caterpillar trees}

The following theorem is a formalization of the arguments of \cite{Horton1997a} that presents a lower bound of $D_{min}^t$ that has no deviation if the tree is a caterpillar ({\em cat}), including then the particular cases of star trees and linear trees discussed above. 
\begin{thm}[Horton \cite{Horton1997a}]
\label{lower_bound_of_D_min_theorem}
For any tree $t$ of $n$ vertices, 
\begin{equation}
D_{min}^t \geq D_{min}^{cat}
\label{lower_bound_of_D_min_equation}
\end{equation}
where $D_{min}^{cat}$ is the value of $D_{min}^t$ of a caterpillar tree with the same degree sequence as $t$.
We have 
\begin{equation}
D_{min}^{cat} = n - 1 + \sum_{i=1}^n a_i,
\label{D_min_caterpillar_equation}
\end{equation}
where
\begin{eqnarray}
a_i & = & \left\lfloor \frac{k_i}{2} \right\rfloor \left\lceil \frac{k_i - 2}{2} \right\rceil \nonumber \\
    & = & \left\lfloor \frac{(k_i - 1)^2}{4} \right\rfloor \label{Hortons_formula_equation}
\end{eqnarray}
and $k_i$ is the degree of the $i$-th vertex.
\end{thm}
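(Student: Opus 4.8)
The plan is to reduce both assertions to a single universal lower bound. Concretely, I would prove that every linear arrangement of every tree $t$ with degree sequence $(k_1,\dots,k_n)$ satisfies $D \ge n - 1 + \sum_{i=1}^n a_i$, with $a_i$ as in (\ref{Hortons_formula_equation}), and separately exhibit one arrangement of the caterpillar having exactly this degree sequence that attains $n - 1 + \sum_i a_i$. The lower bound, applied to an arbitrary tree, immediately gives $D_{min}^t \ge n - 1 + \sum_i a_i$; applied to the caterpillar together with the construction it forces $D_{min}^{cat} = n - 1 + \sum_i a_i$, which is (\ref{D_min_caterpillar_equation}); and combining the two yields $D_{min}^t \ge D_{min}^{cat}$, which is (\ref{lower_bound_of_D_min_equation}).

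For the lower bound I would charge excess length edge by edge rather than vertex by vertex, so as to avoid the factor $\frac{1}{2}$ that the star decomposition (\ref{star_decomposition_equation}) loses. Fix an arrangement and, for each vertex $v$, let $l_v$ and $r_v$ be the numbers of its neighbours placed to its left and right, so $l_v + r_v = k_v$. Take an edge $\{u,w\}$ with $u$ left of $w$, and suppose $w$ is the $j$-th nearest right-neighbour of $u$ while $u$ is the $i$-th nearest left-neighbour of $w$. Then $u$ has $j-1$ right-neighbours strictly between $u$ and $w$, and $w$ has $i-1$ left-neighbours strictly between them; the crucial point is that these two groups are disjoint, since a common vertex would close a triangle with the edge $\{u,w\}$, impossible in a tree. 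Hence at least $(i-1)+(j-1)$ vertices lie strictly between $u$ and $w$, so $\mathrm{len}(\{u,w\}) - 1 \ge (i-1) + (j-1)$. Summing this over all edges and grouping the contributions at their endpoints gives, from each vertex, $\sum_{i=1}^{l_v}(i-1) = {l_v \choose 2}$ on its left and ${r_v \choose 2}$ on its right, so that $D - (n-1) \ge \sum_v \left[{l_v \choose 2} + {r_v \choose 2}\right]$. Finally ${l_v \choose 2} + {r_v \choose 2}$ is minimised over splits $l_v + r_v = k_v$ at the balanced one, where it equals $\left\lfloor (k_v-1)^2/4 \right\rfloor = a_v$, giving $D - (n-1) \ge \sum_v a_v$.

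For the construction I would lay the spine of the caterpillar out monotonically and distribute the leaves of each spine vertex as evenly as possible to its two sides, so that every vertex of degree $k_i$ receives $\left\lceil k_i/2 \right\rceil$ neighbours on one side and $\left\lfloor k_i/2 \right\rfloor$ on the other. A direct check shows that in this arrangement every inequality used above is tight simultaneously, so that $D$ equals $n - 1 + \sum_i \left[{\lceil k_i/2\rceil \choose 2} + {\lfloor k_i/2\rfloor \choose 2}\right]$, and the elementary identity ${\lceil k/2\rceil \choose 2} + {\lfloor k/2\rfloor \choose 2} = \left\lfloor (k-1)^2/4 \right\rfloor$ turns this into $n - 1 + \sum_i a_i$. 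The main obstacle I anticipate is precisely the lower bound: the naive degree method halves the per-vertex excess, and the whole point is the edge-wise charging together with the no-triangle property of trees, which is what recovers the full $\sum_i a_i$ rather than $\frac{1}{2}\sum_i a_i$. Verifying that the caterpillar construction makes the balanced split achievable at every vertex at once, with no wasted gaps, is the other point requiring care.
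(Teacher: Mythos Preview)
Your argument is correct, but it follows a genuinely different route from the paper's proof. The paper does not argue directly at all: it invokes the identity $bcr(t) = D_{min}^t - (n-1) - \sum_i a_i$ for the bipartite crossing number of a tree (Shahrokhi et al.), together with the fact that $bcr\ge 0$ with equality exactly for caterpillars (Chimani et al.), and reads off both the inequality and the caterpillar formula in one line. Your proposal, by contrast, is a self-contained elementary proof: the edge-wise charging with the ``no triangle'' observation gives $D-(n-1)\ge\sum_v\left[\binom{l_v}{2}+\binom{r_v}{2}\right]\ge\sum_v a_v$ for every arrangement of every tree, and the balanced caterpillar layout witnesses equality. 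This is essentially the longer argument the paper attributes to Horton (and explicitly declines to reproduce). What each buys: the paper's route is a two-sentence reduction but imports two nontrivial external results; your route needs nothing beyond acyclicity and makes transparent \emph{why} the degree method loses a factor of $\tfrac12$ and how the tree hypothesis recovers it.

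Two small points worth tightening in a write-up. First, in the edge inequality you should state explicitly that $\mathrm{len}(\{u,w\})-1$ equals the number of vertices strictly between $u$ and $w$, so that disjointness of the two neighbour groups really yields $\mathrm{len}-1\ge (i-1)+(j-1)$; you use this but do not say it. Second, for the caterpillar construction you should spell out the case analysis for spine endpoints versus internal spine vertices: an internal spine vertex of degree $k$ has one spine neighbour on each side and $k-2$ leaves split as evenly as possible, while a spine endpoint has its single spine neighbour on the inner side and must place $\lfloor k/2\rfloor$ leaves on the outer side and $\lceil k/2\rceil-1$ on the inner side. With those placements the only vertices strictly between consecutive spine vertices are leaves of one or the other, and the only vertices between a spine vertex and one of its leaves are closer leaves of the same spine vertex, so both inequalities are tight at every edge simultaneously; your sketch asserts this but the verification deserves a line.
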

\begin{proof}
In trees, the bipartite crossing number is \cite{Shahrokhi2001a}
\begin{equation}
bcr = D_{min} - n + 1 - \sum_{i=1}^n a_i.
\label{bcr_equation}
\end{equation}
$bcr \geq 0$ by definition and $bcr = 0$ if and only if the tree is a caterpillar tree \cite{Chimani2018a}. 
Therefore, equation \ref{bcr_equation} becomes equation \ref{lower_bound_of_D_min_equation} with equality if and only if the tree is a caterpillar. 
A longer proof where $a_i$ is expressed as in equation \ref{Hortons_formula_equation} is found in \cite{Horton1997a}. 
\end{proof}
The following theorem introduces useful algebraic expressions for caterpillar trees, alternating floor with modulo operations. 
\begin{thm}

\label{D_min_caterpillar_theorem}
\begin{eqnarray}
D_{min}^{cat} & = & \DminCaterpillarTreeFloorHorton \label{D_min_caterpillar_floor_Horton_equation} \\
              & = & \DminCaterpillarTreeFloor \label{D_min_caterpillar_floor_equation} \\
              & = & \DminCaterpillarTreeMod. \label{D_min_caterpillar_degree_2nd_moment_equation}
\end{eqnarray}
\end{thm}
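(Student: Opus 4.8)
The plan is to take equation \ref{D_min_caterpillar_floor_Horton_equation} as given, since it is precisely the content of Theorem \ref{lower_bound_of_D_min_theorem} (Horton's formula, equations \ref{D_min_caterpillar_equation} and \ref{Hortons_formula_equation}), and then to derive the two remaining identities by purely algebraic manipulation of the floor terms together with the handshaking lemma $\sum_{i=1}^n k_i = 2(n-1)$. In other words, the theorem is an identity, not a new optimality statement, so no combinatorial argument about arrangements is needed beyond what is already established.

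First I would prove the equivalence of equation \ref{D_min_caterpillar_floor_Horton_equation} and equation \ref{D_min_caterpillar_floor_equation}. Subtracting one from the other, this reduces to showing
\begin{equation*}
\sum_{i=1}^n \left(\left\lfloor \frac{1}{4}(k_i+1)^2 \right\rfloor - \left\lfloor \frac{1}{4}(k_i-1)^2 \right\rfloor\right) = 2(n-1).
\end{equation*}
The key observation is that $(k+1)^2 - (k-1)^2 = 4k$, so before flooring the two quarter-squares differ by exactly $k$. To confirm the floors introduce no correction, I would split on the parity of $k$: when $k$ is odd both $(k\pm1)^2$ are multiples of $4$ and the floors are exact; when $k$ is even both $(k\pm1)^2$ are congruent to $1$ modulo $4$, so each floor discards the same $1/4$ and the termwise difference is still $k$. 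Summing $k_i$ and invoking the handshaking lemma yields $2(n-1)$.

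Next I would prove equation \ref{D_min_caterpillar_degree_2nd_moment_equation}. The crux is to put the single floor into closed form. A short parity argument gives $(k-1)^2 \bmod 4 = 1 - (k \bmod 2)$, so that
\begin{equation*}
\left\lfloor \frac{(k-1)^2}{4} \right\rfloor = \frac{1}{4}\left(k^2 - 2k + (k \bmod 2)\right).
\end{equation*}
Summing over all vertices and using $\sum_i k_i^2 = n\DegreeSecondMoment$, $\sum_i k_i = 2(n-1)$ and $\sum_i (k_i \bmod 2) = q$, the $-2k$ terms contribute $-4(n-1)$; after multiplying by the overall $1/4$ this is exactly $-(n-1)$, which cancels the leading $n-1$ of equation \ref{D_min_caterpillar_floor_Horton_equation} and leaves $\frac{1}{4}(n\DegreeSecondMoment + q)$, as claimed.

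The main obstacle is purely the floor-and-parity bookkeeping: I must track the residue of $(k\pm1)^2$ modulo $4$ correctly in each parity case and ensure that the stray $1/4$ corrections match precisely the odd-degree count $q$ rather than some off-by-one variant. Once the closed form for $\lfloor (k-1)^2/4 \rfloor$ is secured, both identities follow immediately from the handshaking lemma and the definitions of $\DegreeSecondMoment$ and $q$, with no further difficulty.
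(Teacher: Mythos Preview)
Your proof is correct. The approach differs from the paper's in one instructive way: to convert Horton's summand $\lfloor (k_i-1)^2/4\rfloor$ into the two alternative closed forms, the paper does not compute residues modulo $4$ directly. Instead it specializes Theorem~\ref{lower_bound_of_D_min_theorem} to a star tree, where the hub contributes the only nonzero term, equates this with the independently known formula $D_{min}^{star}=\lfloor n^2/4\rfloor$, and then performs the change of variable $k_i=n-1$ to read off $a_i=\lfloor (k_i+1)^2/4\rfloor-k_i$ and its mod-based variant. From there the summation and the handshaking lemma proceed exactly as in your argument. Your direct parity computation of $(k\pm 1)^2 \bmod 4$ is more elementary and self-contained (it does not appeal to Iordanskii's star formula), while the paper's trick is a nice illustration of recovering a general identity from a well-chosen special case; both routes land on the same two summations and finish identically.
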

\begin{proof}
Equation \ref{D_min_caterpillar_floor_Horton_equation} is due to \cite{Horton1997a}. 
Theorem \ref{lower_bound_of_D_min_theorem} on a star tree ($k_1 = n - 1$ and $k_i = 1$ for $2 \leq i \leq n$) gives
\begin{eqnarray*}
\left\lfloor \frac{n-1}{2} \right\rfloor \left\lceil \frac{n-1}{2} - 1 \right\rceil & = & D_{star}^{min} - (n - 1)
\end{eqnarray*}
The formulae for $D_{star}^{min}$ in Table \ref{summary_table} then give
\begin{eqnarray*}
\left\lfloor \frac{n-1}{2} \right\rfloor \left\lceil \frac{n-1}{2} - 1 \right\rceil                                                                        & = & \frac{n^2 - {n \bmod 2}}{4} - (n - 1) \\
                                                                                    & = & \left\lfloor \frac{1}{4} n^2 \right\rfloor - (n - 1).
\end{eqnarray*}
The change of variable $k_i = n - 1$ gives
\begin{eqnarray*}
a_i & = & \frac{(k_i+1)^2 - {(k_i+1) \bmod 2}}{4} - k_i \\
    & = & \left\lfloor \frac{1}{4} (k_i + 1)^2 \right\rfloor - k_i. 
\end{eqnarray*}
Plugging these results into equation \ref{D_min_caterpillar_equation}, one obtains 
\begin{eqnarray*}
D_{min}^{cat} & = & n - 1 + \sum_{i=1}^n a_i \\
              & = & n - 1 + \sum_{i=1}^n \left(\left\lfloor \frac{1}{4} (k_i + 1)^2 \right\rfloor - k_i \right) \\ 
              & = & n - 1 + \sum_{i=1}^n \left\lfloor \frac{1}{4} (k_i + 1)^2 \right\rfloor - \sum_{i=1}^n k_i \\
              & = & \DminCaterpillarTreeFloor  
\end{eqnarray*}
and also 
\begin{eqnarray*}
D_{min}^{cat} & = & n - 1 + \sum_{i=1}^n a_i \\
              & = & n - 1 + \frac{1}{4}\sum_{i=1}^n \left[k_i^2 + 2 k_i + 1 - {(k_i + 1) \bmod 2}\right] - \sum_{i=1}^n k_i \\ 
              & = & \frac{1}{4}\sum_{i=1}^n \left[k_i^2 + {k_i \bmod 2}\right]
 + \frac{1}{2} \sum_{i=1}^n k_i - (n - 1) \\ 
              & = & \DminCaterpillarTreeMod. 
\end{eqnarray*}
\end{proof}

It is easy to see that $D_{min}^{cat}$ is a tighter lower bound of $D_{min}^t$ than $D_0^t$. Thanks to equations \ref{D_min_lower_bound_degree_method_equation} and \ref{D_min_caterpillar_degree_2nd_moment_equation}, the condition $D_{min}^{cat} \geq D_0^{t}$ is equivalent to 
\begin{equation*}
\DegreeSecondMoment \geq 4 \left(1 - \frac{1}{n} \right) - \frac{q}{n}.
\end{equation*}
Furthermore, this condition will be always satisfied provided that $n \geq 2$ 
because it holds even when $\DegreeSecondMoment$ takes its minimum value, namely \cite{Ferrer2013b}
\begin{equation*}
\DegreeSecondMoment^{linear} = 4-\frac{6}{2}.
\end{equation*}
The substitution by $\DegreeSecondMoment^{linear}$ in the condition above gives $q \geq 2$, which is trivially true for any tree such that $n \geq 2$ as any tree with such a number of vertices has at least two leaves. 

\subsection{Bistar trees}

The following corollary presents formulae of $D_{min}^t$ for bistar trees and three instances: stars, quasistars and balanced star trees. 
\begin{cor}
In any bistar tree, where $k_1$ is the largest degree, 
\begin{eqnarray}
D_{min}^{bistar} & = & \DminBistarTreeFloor \label{D_min_bistar_tree_floor_equation} \\
                 & = & \DminBistarTreeMod, \label{D_min_bistar_tree_mod_equation} 
\end{eqnarray}
where 
\begin{equation*}
q' = {k_1 \bmod 2} + {(n - k_1) \bmod 2}.  
\end{equation*}
In addition,
\begin{eqnarray}
D_{min}^{b-bistar} & = & \DminBalancedBistarTreeFloor \label{D_min_balanced_bistar_tree_floor_equation} \\
                   & = & \DminBalancedBistarTreeMod, \label{D_min_balanced_bistar_tree_mod_equation}
\end{eqnarray} 
with
\begin{equation*}
\phi = \DminBalancedBistarTreeModDecrementCarlos,
\end{equation*}
and also
\begin{eqnarray}
D_{min}^{quasi} & = & \DminQuasistarTreeFloor \nonumber \\
                & = & \DminQuasistarTreeMod, \label{D_min_quasi_tree_equation}
\end{eqnarray}
\begin{eqnarray}
D_{min}^{star} & = & \DminStarTreeFloor \nonumber \\
               & = & \DminStarTreeMod. \label{D_min_star_tree_equation}
\end{eqnarray}
\end{cor}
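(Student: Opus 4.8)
The plan is to recognize that every bistar tree is a caterpillar tree, so that the entire corollary reduces to evaluating Theorem \ref{D_min_caterpillar_theorem} on the bistar degree sequence. Indeed, deleting all the leaves of a bistar leaves at most the single inner edge joining the two hubs, which is a (possibly degenerate) linear tree; hence the caterpillar formulae apply verbatim. The degree sequence of a bistar is one vertex of degree $k_1$, one of degree $k_2 = n - k_1$ (equation \ref{2nd_largest_degree_equation}), and $n-2$ leaves of degree $1$, so the whole task is substitution into the two forms of $D_{min}^{cat}$ followed by three specializations.

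First I would derive the floor form, equation \ref{D_min_bistar_tree_floor_equation}, from equation \ref{D_min_caterpillar_floor_equation}. Each leaf contributes $\left\lfloor \frac{1}{4}(1+1)^2 \right\rfloor = 1$, so the $n-2$ leaves contribute $n-2$, which cancels against the $-(n-1)$ term to leave $-1$; the two hubs contribute the two floor terms, yielding $\DminBistarTreeFloor$. Then I would derive the second-moment form, equation \ref{D_min_bistar_tree_mod_equation}, from equation \ref{D_min_caterpillar_degree_2nd_moment_equation}. Here I would substitute $n\DegreeSecondMoment = k_1^2 + (n-k_1)^2 + n - 2$ from equation \ref{degree_2nd_moment_bistar_equation}, and observe that $q$, the number of odd-degree vertices, equals $(n-2) + q'$, since every leaf is odd and the two hubs contribute $q' = {k_1 \bmod 2} + {(n-k_1) \bmod 2}$. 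A one-line expansion of $k_1^2 + (n-k_1)^2 = 2k_1(k_1-n) + n^2$ then collapses everything to $\DminBistarTreeMod$.

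The three instances follow by substituting $k_1 = n-1$ (star), $k_1 = n-2$ (quasistar), and $k_1 = \LargestDegreeBalancedBistarTree$ (balanced bistar, equation \ref{maximum_degree_balanced_bistartree_equation}) into the bistar formulae just proved. The star and quasistar substitutions are routine and must reproduce the known formulae, equations \ref{D_min_star_tree_equation} and \ref{D_min_quasi_tree_equation}. I expect the main obstacle to be the balanced bistar, equations \ref{D_min_balanced_bistar_tree_floor_equation}--\ref{D_min_balanced_bistar_tree_mod_equation}: with $k_1 = \lceil n/2 \rceil$ and $k_2 = \lfloor n/2 \rfloor$, the two hub floor terms have arguments summing to $n+2$, and showing that $\left\lfloor \frac{1}{4}(\lceil n/2\rceil+1)^2\right\rfloor + \left\lfloor \frac{1}{4}(\lfloor n/2\rfloor+1)^2\right\rfloor$ equals $\left\lfloor \frac{1}{8}(n+2)^2\right\rfloor$ requires casework on the parity of $n$ (and of $\lfloor n/2 \rfloor$). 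The same bookkeeping produces the correction term $\phi = \DminBalancedBistarTreeModDecrementCarlos$ in the modulo form, obtained by tracking $(n+2)^2$ modulo $8$. Once this parity analysis is carried out, the remaining algebra is direct and the corollary is complete.
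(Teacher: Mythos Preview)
Your proposal is correct and follows essentially the same route as the paper: recognize that a bistar is a caterpillar, plug the degree sequence $(k_1,\,n-k_1,\,1,\ldots,1)$ into the two caterpillar formulae of Theorem~\ref{D_min_caterpillar_theorem} (using equation~\ref{degree_2nd_moment_bistar_equation} and $q=(n-2)+q'$), and then specialize to $k_1=n-1$, $k_1=n-2$, and $k_1=\lceil n/2\rceil$, with parity casework to reach the compact balanced-bistar forms. The only cosmetic difference is that the paper obtains equation~\ref{D_min_balanced_bistar_tree_floor_equation} from equation~\ref{D_min_balanced_bistar_tree_mod_equation} via the definition of $\bmod$ rather than directly, but this is the same argument reordered.
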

\begin{proof}
As a bistar tree is a caterpillar tree, the application of equation \ref{D_min_caterpillar_floor_equation} (Theorem \ref{D_min_caterpillar_theorem}) with $k_2 = n - k_1$ and $k_i = 1$ for $i \geq 3$,  gives equation \ref{D_min_bistar_tree_floor_equation}. Besides, the application of \ref{D_min_caterpillar_degree_2nd_moment_equation} (Theorem \ref{D_min_caterpillar_theorem}) with $\DegreeSecondMoment$ for a bistar tree (equation \ref{degree_2nd_moment_bistar_equation}) produces equation \ref{D_min_bistar_tree_mod_equation} after some mechanical work.

As a balanced bistar tree is a bistar tree with $k_1 = \left\lceil n/2 \right\rceil$, equation \ref{D_min_bistar_tree_floor_equation} gives 
\begin{equation*}
\left\lfloor \frac{1}{4}\left(\LargestDegreeBalancedBistarTree + 1 \right)^2 \right\rfloor + \left\lfloor \frac{1}{4}\left(\SecondLargestDegreeBalancedBistarTree + 1\right)^2 \right\rfloor - 1
\end{equation*}
immediately. However, a much simpler modular formula will be obtained from equation \ref{D_min_bistar_tree_mod_equation}. In this respect, 
notice that 
\begin{equation*}
k_1 = \frac{1}{2}(n + {n \bmod 2}),
\end{equation*}
and also that $0 \leq q' \leq 2$, in particular, $q' = 1$ if $n$ is odd and
\begin{equation*}
q' = 2 \left( {\frac{n}{2} \bmod 2} \right)
\end{equation*} 
if $n$ is even. Then equation \ref{D_min_bistar_tree_mod_equation} produces
\begin{equation*}
D_{min}^{b-bistar}
        =    \left\{
                \begin{array}{lr} 
                   \frac{1}{8}(n^2+4n- 8 + 4 (n/2 \bmod 2)) & n\mbox{~is even} \\ 
                   \frac{1}{8}(n^2+4n-5) & n\mbox{~is odd}.
                \end{array} 
             \right.
\end{equation*}
or 
\begin{equation*}
D_{min}^{b-bistar}
        =    \left\{
                \begin{array}{ll} 
                   \frac{1}{8}(n^2+4n-8) & n \bmod 4 = 0 \\ 
                   \frac{1}{8}(n^2+4n-4) & n \bmod 4 = 2 \\ 
                   \frac{1}{8}(n^2+4n-5) & \mbox{otherwise}.
                \end{array} 
             \right.
\end{equation*}
in expanded form.
From this point, equation \ref{D_min_balanced_bistar_tree_mod_equation} follows immediately. Noting that 
\begin{eqnarray*}
D_{min}^{b-bistar} = \frac{1}{8}[(n + 2)^2 - 8 - (n + 2)^2 \bmod 8]
\end{eqnarray*}
and applying the definition of modulus, i.e. 
\begin{eqnarray*}
(n + 2)^2 \bmod 8 = (n + 2)^2 - 8 \left\lfloor \frac{(n+2)^2}{8} \right\rfloor,
\end{eqnarray*}
one finally obtains \ref{D_min_balanced_bistar_tree_floor_equation}.
 
A quasistar tree is a bistar tree where $k_1 = n - 2$, which transforms equations \ref{D_min_bistar_tree_floor_equation} and \ref{D_min_bistar_tree_mod_equation} into equation \ref{D_min_quasi_tree_equation} after some algebraic work. 
Similarly, a star tree is a bistar tree where $k_1 = n - 1$, which transforms equations \ref{D_min_bistar_tree_floor_equation} and \ref{D_min_bistar_tree_mod_equation} into equation \ref{D_min_star_tree_equation}. 
Equation \ref{D_min_star_tree_equation} has been  derived through other means \cite{Esteban2016a}.
\end{proof}

\begin{figure}
\centering
\includegraphics[scale = 0.6]{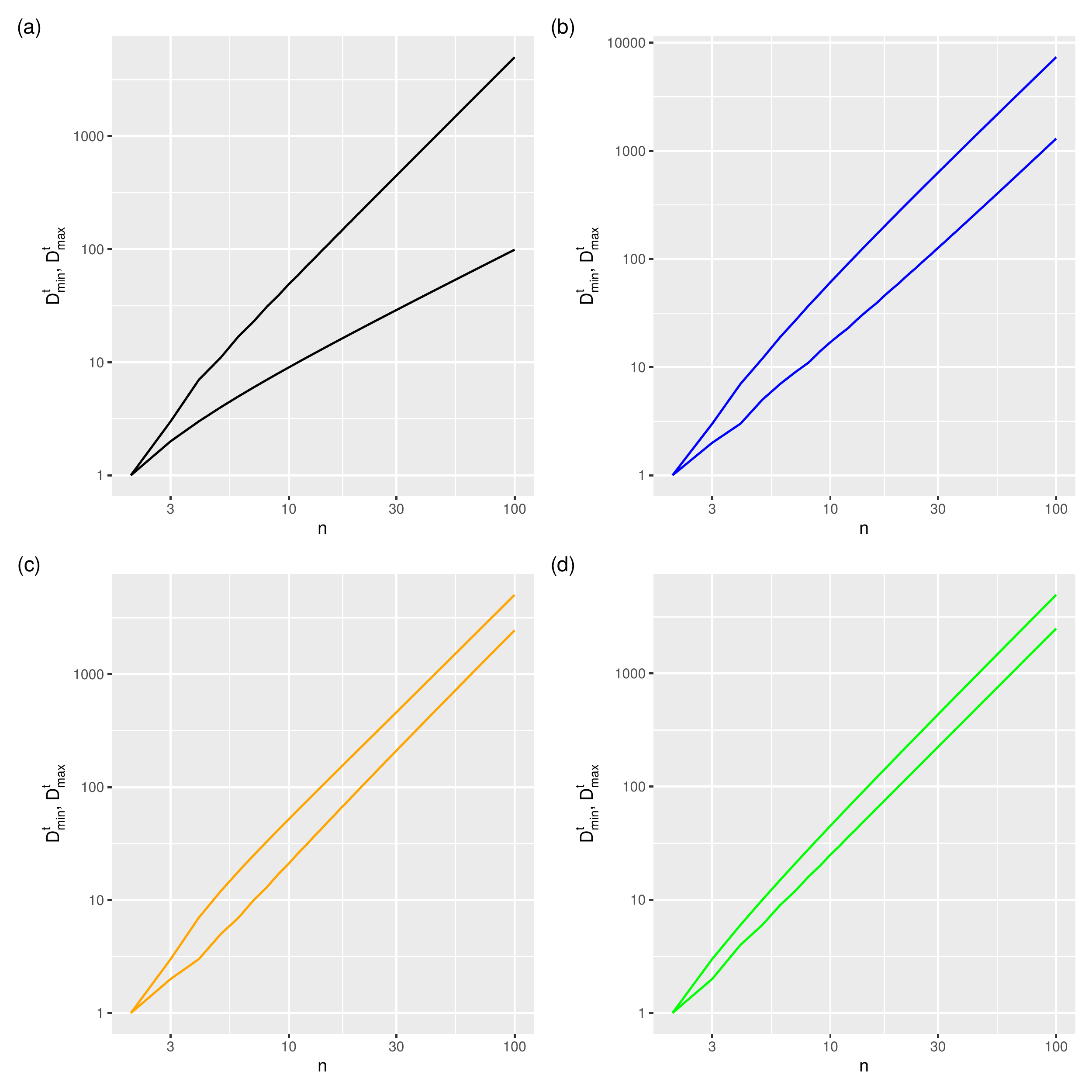}
\caption{\label{scaling_of_D_min_and_D_max_by_tree_figure} The scaling of $D_{min}^{t}$ and $D_{max}^{t}$ as a function of $n$, the number of vertices of the tree $t$, for different trees sorted increasingly by their hubiness (Table \ref{hubiness_table}). (a) linear trees (black). (b) balanced bistar trees (blue). (c) quasistar trees (orange) and (d) star trees (green). }
\end{figure}

Figure \ref{scaling_of_D_min_and_D_max_figure} (a) shows the growth of $D_{min}^t$ for different trees. Figure \ref{scaling_of_D_min_and_D_max_by_tree_figure} compares the growth of $D_{max}^t$ against $D_{min}^t$ for each tree.

\section{The maxima of optimality scores}

\label{maxima_of_optimality_scores_section}

Here we aim to investigate a couple of optimality scores: $\Delta^t = D^t - D_{min}^t$ \cite{Gulordava2015} and $\Gamma^t=D^t/D_{min}^t$ \cite{Ferrer2004b,Tily2010a}. By definition of $D_{min}^t$,
$\Delta^t \geq \Delta_{min}^t = 0$ and $\Gamma^t \geq \Gamma_{min}^t = 1$. 

For a specific tree $t$ of $n$ vertices, the maximum value of $\Delta^t$ over all possible linear arrangements is 
\begin{equation*}
\Delta_{max}^t = D_{max}^{t} - D_{min}^{t}.
\end{equation*}
Similarly, 
\begin{equation*}
\Gamma_{max}^t = \frac{D_{max}^{t}}{D_{min}^{t}}.
\end{equation*}
Indeed, $\Delta_{max}^t$ is the vertical distance separating $D_{min}^t$ and $D_{max}^t$ for a given $n$ in Figure \ref{scaling_of_D_min_and_D_max_by_tree_figure}.

Table \ref{summary_table} allows one to obtain formulae of $\Delta_{max}^t$ or $\Gamma_{max}^t$ for specific trees.  
Figure \ref{scaling_of_delta_and_gamma_figure} shows the growth of $\Delta_{max}^t$ and $\Gamma_{max}^t$ for specific trees. The star tree is actually a baseline because we will show that it minimizes $\Delta_{max}^t$ and $\Gamma_{max}^t$. 
In star trees, quasistar trees and balanced bistar trees, $\Gamma_{max}^t$ converges to a constant (figure \ref{scaling_of_delta_and_gamma_figure} (b)) because both $D_{min}^t$ and $D_{max}^t$ are quadratic functions of $n$ (Table \ref{summary_table}). In balanced bistar trees, the leading coefficients are $1/8$ and $3/4$, respectively, which gives
\begin{equation*}
\lim_{n \rightarrow \infty} \Gamma_{max}^{b-bistar} = 6.
\end{equation*}    
By similar arguments,
\begin{equation*}
\lim_{n \rightarrow \infty} \Gamma_{max}^{star} = \lim_{n \rightarrow \infty} \Gamma_{max}^{quasi} = 2.
\end{equation*}
These limiting values are consistent with figure \ref{scaling_of_delta_and_gamma_figure} (b).

Here we aim to apply the results in the preceding sections to investigate an important question for research on these scores as a function of $n$ \cite{Ferrer2004b,Gulordava2015}: what are the minimum and the maximum value that $\Delta_{max}^t$ or $\Gamma_{max}^t$ can attain over all trees of $n$ vertices?



\begin{figure}
\centering
\includegraphics[scale = 0.8]{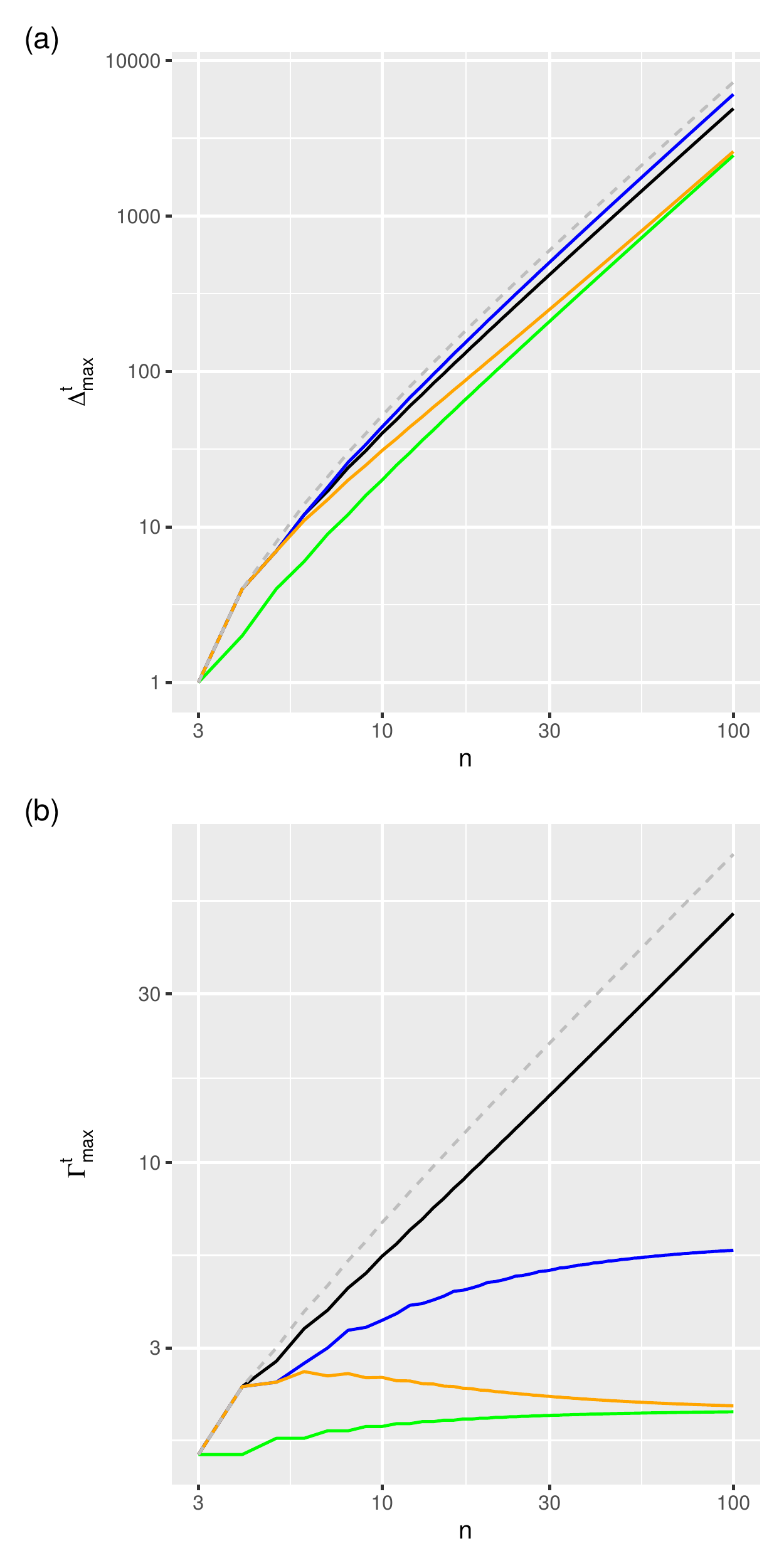}
\caption{\label{scaling_of_delta_and_gamma_figure} The scaling of the maximum value of the optimality scores as a function of $n$, the number of vertices of the tree $t$, for different trees: linear trees (black), balanced bistar trees (blue), quasistar trees (orange) and star trees (green). 
(a) $\Delta_{max}^{t}$. For reference, the upper bound $D_{max}^{b-bistar} - D_{min}^{linear}$ (dashed gray line) is also shown. 
(b) $\Delta_{max}^{t}$. For reference, the upper bound $D_{max}^{b-bistar}/D_{min}^{linear}$ (dashed gray line) is also shown.}
\end{figure}

\subsection{The minima of $\Delta_{max}^t$ and $\Gamma_{max}^t$}

The fact that $D_{max}^t \geq D_{max}^{star}$ (Theorem \ref{minimum_D_max_theorem}) and $D_{min}^t \leq D_{min}^{star}$ \cite{Iordanskii1974a,Esteban2016a} gives
\begin{eqnarray*}
D_{max}^t - D_{min}^{t} \geq D_{max}^{star} - D_{min}^{star} \\
\Delta_{max}^t \geq \Delta_{max}^{star} \\
\end{eqnarray*}
and also
\begin{eqnarray*}
\frac{D_{max}^t}{D_{min}^{t}} \geq \frac{D_{max}^{star}}{D_{min}^{star}} \\
\Gamma_{max}^t \geq \Gamma_{max}^{star}. \\
\end{eqnarray*}

\subsection{The maxima of $\Delta_{max}^t$ and $\Gamma_{max}^t$}

The fact that $D_{max}^t \leq D_{max}^{b-bistar}$ (Theorem \ref{maximum_D_max_theorem}) and $D_{min}^t \geq D_{min}^{linear}$ \cite{Esteban2016a} 
imply that
\begin{equation}
\Delta_{max}^t \leq D_{max}^{b-bistar} - D_{min}^{linear} \label{Delta_max_equation}
\end{equation}
and also
\begin{equation}
\Gamma_{max}^t \leq \frac{D_{max}^{b-bistar}}{D_{min}^{linear}}. \label{Gamma_max_equation}
\end{equation}
However, these are unlikely to be tight upper bounds of $\Delta_{max}^t$ and $\Gamma_{max}^t$ because the two kinds of trees involved in equation \ref{Delta_max_equation} and equation \ref{Gamma_max_equation}, star trees and balanced bistar trees, are not the same, contrary to what happened for the minima of $\Delta_{max}^t$ and $\Gamma_{max}^t$, given exactly by a star tree in both cases.

\begin{table}
\caption{\label{Delta_max_table} 
Maximum $\Delta_{max}^t$ as a function of $n$ and statistical properties of the trees that reach it: the kind of tree, $K_2 = n\left<k^2\right>$, the sum of squared degrees, $k_1$, the maximum degree, $n_1$, the number of leaves, $L^t$, the diameter in edges, $\left<l\right>^t$, the average path length, and finally, $D_{min}^t$ and $D_{max}^t$, the minimum and the maximum of $D^t$ over all $n!$ linear arrangements. As for the kind of tree, {\em quasi} stands for quasistar tree, {\em b-bistar} for balanced bistar and {\em cat} for caterpillar. 
}	
\begin{indented}
\item[]
\begin{tabular}{lrcrrrrrrrr}
\br
$n$ & $\Delta_{max}^{b-bistar}$ & Maximum & Kind of tree & $K_2$ & $k_1$ & $n_1$ & $L^t$ & $\left<l\right>^t$ & $D_{min}^t$ & $D_{max}^t$ \\
    &                           & $\Delta_{max}^t$ \\
\mr
3 & 1 & 1 &  linear star b-bistar & 6 & 2 & 2 & 2 & 1.33 & 2 & 3 \\ 
4 & 4 & 4 &  linear quasi b-bistar & 10 & 2 & 2 & 3 & 1.67 & 3 & 7 \\ 
5 & 7 & 7 &  linear & 14 & 2 & 2 & 4 & 2 & 4 & 11 \\ 
 & & &  quasi b-bistar & 16 & 3 & 3 & 3 & 1.8 & 5 & 12 \\ 
6 & 12 & 12 &  linear & 18 & 2 & 2 & 5 & 2.33 & 5 & 17 \\ 
 & & &  cat & 20 & 3 & 3 & 4 & 2.07 & 6 & 18 \\ 
 & & &  b-bistar & 22 & 3 & 4 & 3 & 1.93 & 7 & 19 \\ 
7 & 18 & 18 &  cat & 24 & 3 & 3 & 5 & 2.48 & 7 & 25 \\ 
 & & &  cat & 24 & 3 & 3 & 5 & 2.38 & 7 & 25 \\ 
 & & &  cat & 26 & 3 & 4 & 4 & 2.19 & 8 & 26 \\ 
 & & &  cat & 28 & 4 & 4 & 4 & 2.1 & 8 & 26 \\ 
 & & &  b-bistar & 30 & 4 & 5 & 3 & 2 & 9 & 27 \\ 
8 & 26 & 26 &  cat & 30 & 3 & 4 & 5 & 2.43 & 9 & 35 \\ 
 & & &  cat & 30 & 3 & 4 & 5 & 2.64 & 9 & 35 \\ 
 & & &  cat & 30 & 3 & 4 & 5 & 2.5 & 9 & 35 \\ 
 & & &  cat & 34 & 4 & 5 & 4 & 2.21 & 10 & 36 \\ 
 & & &  b-bistar & 38 & 4 & 6 & 3 & 2.07 & 11 & 37 \\ 
9 & 34 & 35 &  cat & 38 & 4 & 5 & 5 & 2.5 & 11 & 46 \\ 
 & & &  cat & 38 & 4 & 5 & 5 & 2.56 & 11 & 46 \\ 
 & & &  cat & 38 & 4 & 5 & 5 & 2.44 & 11 & 46 \\ 
 & & &  cat & 38 & 4 & 5 & 5 & 2.72 & 11 & 46 \\ 
 & & &  cat & 42 & 4 & 6 & 4 & 2.28 & 12 & 47 \\ 
10 & 44 & 46 &  cat & 46 & 4 & 6 & 5 & 2.47 & 13 & 59 \\ 
 & & &  cat & 46 & 4 & 6 & 5 & 2.82 & 13 & 59 \\ 
 & & &  cat & 46 & 4 & 6 & 5 & 2.56 & 13 & 59 \\ 
11 & 55 & 57 &  cat & 50 & 4 & 6 & 6 & 2.73 & 14 & 71 \\ 
 & & &  cat & 50 & 4 & 6 & 6 & 2.87 & 14 & 71 \\ 
 & & &  cat & 50 & 4 & 6 & 6 & 3.02 & 14 & 71 \\ 
 & & &  cat & 52 & 4 & 7 & 5 & 2.58 & 15 & 72 \\ 
 & & &  cat & 52 & 4 & 7 & 5 & 2.73 & 15 & 72 \\ 
 & & &  cat & 52 & 4 & 7 & 5 & 2.84 & 15 & 72 \\ 
 & & &  cat & 52 & 4 & 7 & 5 & 2.62 & 15 & 72 \\ 
 & & &  cat & 56 & 5 & 7 & 5 & 2.55 & 16 & 73 \\ 
 & & &  cat & 56 & 5 & 7 & 5 & 2.58 & 16 & 73 \\ 
 & & &  cat & 56 & 5 & 7 & 5 & 2.87 & 16 & 73 \\ 
 & & &  cat & 56 & 5 & 7 & 5 & 2.47 & 16 & 73 \\ 
  
\br
\end{tabular}
\end{indented}
\end{table}

We perform a computational analysis of the maxima of $\Delta_{max}^t$ and $\Gamma_{max}^t$ (the methods are explained in \ref{validation_appendix}).
One the one hand, such analysis indicates that (Table \ref{Delta_max_table})
\begin{equation}
\Delta_{max}^t \leq \Delta_{max}^{b-bistar}
\label{Delta_max_conjecture1_equation}
\end{equation}
for $n \leq 8$, consistently with figure \ref{scaling_of_delta_and_gamma_figure} (a), but 
\begin{equation}
\Delta_{max}^t \leq \Delta_{max}^{t^*}
\label{Delta_max_conjecture2_equation}
\end{equation}
for $9 \leq n \leq 11$,
where $t^*$ is some caterpillar tree that is neither a bistar nor a linear tree. In addition, the bistar tree is not the only tree maximizing $\Delta_{max}^t$ for $4 \leq n \leq 8$ (Table \ref{Delta_max_table}). Notice that, for $n = 3$, a linear tree, a star tree and a balanced bistar tree are actually the same tree (when $n = 4$, the linear tree and the balanced bistar tree are the same tree). On the other hand, the analysis indicates that (Table \ref{Gamma_max_table}) 
\begin{equation}
\Gamma_{max}^t \leq \Gamma_{max}^{linear}
\label{Gamma_max_conjecture_equation}
\end{equation}
for $n \leq 11$, consistently with figure \ref{scaling_of_delta_and_gamma_figure} (b). Interestingly, the linear tree is the only tree maximizing $\Gamma_{max}^t$ up to $n = 11$ (Table \ref{Gamma_max_table}).

\begin{table}
\caption{\label{Gamma_max_table} 
Maximum $\Gamma_{max}^t$ as a function of $n$ and statistical properties of the trees that reach it. The format is based on that of Table \ref{Delta_max_table}. $\left<l\right>^t = (n+1)/3$ as expected for a linear tree \cite{Ferrer2003a}. }	
\begin{indented}
\item[]
\begin{tabular}{lrcrrrrrrrr}
\br
$n$ & $\Gamma_{max}^{linear}$ & Maximum & Kind of tree & $K_2$ & $k_1$ & $n_1$ & $L^t$ & $\left<l\right>^t$ & $D_{min}^t$ & $D_{max}^t$ \\
    &                           & $\Gamma_{max}^t$ \\
\mr
3 & 1.5 & 1.5 &  linear star b-bistar & 6 & 2 & 2 & 2 & 1.33 & 2 & 3 \\ 
4 & 2.33 & 2.33 &  linear quasi b-bistar & 10 & 2 & 2 & 3 & 1.67 & 3 & 7 \\ 
5 & 2.75 & 2.75 &  linear & 14 & 2 & 2 & 4 & 2 & 4 & 11 \\ 
6 & 3.4 & 3.4 &  linear & 18 & 2 & 2 & 5 & 2.33 & 5 & 17 \\ 
7 & 3.83 & 3.83 &  linear & 22 & 2 & 2 & 6 & 2.67 & 6 & 23 \\ 
8 & 4.43 & 4.43 &  linear & 26 & 2 & 2 & 7 & 3 & 7 & 31 \\ 
9 & 4.88 & 4.88 &  linear & 30 & 2 & 2 & 8 & 3.33 & 8 & 39 \\ 
10 & 5.44 & 5.44 &  linear & 34 & 2 & 2 & 9 & 3.67 & 9 & 49 \\ 
11 & 5.9 & 5.9 &  linear & 38 & 2 & 2 & 10 & 4 & 10 & 59 \\ 

\br
\end{tabular}
\end{indented}
\end{table}

\subsection{The relationship with $\Delta_{rla}$ and $\Gamma_{rla}$}

We define the expected value of $\Delta^t$ and $\Gamma^t$ in a random linear arrangement (rla) of a given tree $t$ as 
$\Delta_{rla}^t$ and $\Gamma_{rla}^t$ respectively.
Recall that $D_{rla} = \E_{rla}[D]$. Given a tree $t$, $D_{min}^t$ and $D_{rla}$ are constant and then
\begin{eqnarray*}
\Delta_{rla}^t & = & \E_{rla}[\Delta^t] \\
               & = & \E_{rla}[D^t - D_{min}^t] \\
               & = & D_{rla} - D_{min}^t \\   
\Gamma_{rla}^t & = & \E_{rla}[\Gamma^t] \\
               & = & \E_{rla}\left[\frac{D^t}{D_{min}^t}\right] \\
               & = & \frac{D_{rla}}{D_{min}^t}.
\end{eqnarray*}
The fact that 
\begin{equation*}
D_{min}^t \leq D_{rla} \leq D_{max}^t
\end{equation*}
gives
\begin{eqnarray*}
0 = \Delta_{min}^t \leq \Delta_{rla}^t \leq \Delta_{max}^t \\
1 = \Gamma_{min}^t \leq \Gamma_{rla}^t \leq \Gamma_{max}^t.
\end{eqnarray*}

\section{The minimum and the maximum $z$-score}

\label{zeta_score_section}

\begin{table}
\caption{\label{variance_table} 
$\V_{rla}^t$, the variance of $D^t$ in uniformly random linear arrangements of a tree $t$ of $n$ vertices for specific trees. $\V_{rla}^{linear}$ and $\V_{rla}^{star}$ are borrowed from \cite{Ferrer2018a}. $\V_{rla}^{b-bistar}$ and $\V_{rla}^{quasi}$ are derived from \ref{variance_of_sum_of_edge_lengths_tree_equation} and the corresponding value of $\DegreeSecondMoment$ in table \ref{hubiness_table}. }

\begin{indented}
\item[]
\begin{tabular}{lrr}
\br
$t$       & $\V_{rla}^t$ \\
\mr
linear    & $\frac{1}{90}(n - 2)(n + 1)(4n - 7)$ \\
 & & \\
balanced bistar & $\frac{1}{180}(n+1)\left[2(n-4)\lceil n/2 \rceil(\lceil n/2 \rceil - n) + n(n^2-n-14) + 12 \right]$ \\
 & & \\
quasistar & $\frac{1}{180}(n+1)[n((n-3)n + 10)-20]$ \\
 & & \\
star      & $\frac{1}{180}(n+1)(n-1)(n+2)(n-2)$ \\
 & & \\
\br
\end{tabular}
\end{indented}
\end{table}

For a specific tree $t$ of $n$ vertices, the minimum and the maximum values of $D_z^t$ over all possible linear arrangements are 
\begin{eqnarray*}
D_{z,min}^t = \frac{D_{min}^{t} - D_{rla}}{(\V_{rla}^t)^{1/2}} \\
D_{z,max}^t = \frac{D_{max}^{t} - D_{rla}}{(\V_{rla}^t)^{1/2}}.
\end{eqnarray*}
Table \ref{summary_table} and \cite{Ferrer2018a}
\begin{equation}
\V_{rla}^t = \frac{n+1}{45} \left[ (n-1)^2 + \left(\frac{n}{4} - 1 \right)n \left<k^2 \right> \right]
\label{variance_of_sum_of_edge_lengths_tree_equation}
\end{equation}
allow one to obtain formulae of $D_{z,min}^t$ and $D_{z,max}^t$ for specific trees. Table \ref{variance_table} summarizes $\V_{rla}^t$ in these trees.
Let us consider $D_{z,min}^{linear}$ as an example. The numerator of $D_{z,min}^{linear}$ is (Table \ref{summary_table} and equation \ref{sum_of_dependecy_lengths_random_equation})
\begin{equation*}
D_{min}^{linear} - D_{rla} = -\frac{1}{3}(n - 1)(n - 2) 
\end{equation*}
whereas the denominator is $\V_{rla}^{linear}$ (Table \ref{variance_table}). 
Then 
\begin{equation*}
D_{z,min}^{linear} = -(n - 1)\left[\frac{10(n - 2)}{(n + 1)(4n - 7)}\right]^{-1/2}.
\end{equation*}
Figure \ref{scaling_of_zeta_score_figure} shows the evolution of $D_{z,min}^t$ and $D_{z,max}^t$ as $n$ increases for specific trees.

In star trees, quasistar trees and balanced bistar trees, both $D_{min}^t-D_{rla}$, $D_{max}^t-D_{rla}$ and $(\V_{rla}^t)^{1/2}$ are quadratic functions of $n$ (Tables \ref{summary_table} and \ref{variance_table}). In balanced bistar trees, the leading coefficient of $D_{min}^t-D_{rla}$ is $1/8-1/3=-5/24$ and that of $(\V_{rla}^t)^{1/2}$ is $1/(6\sqrt{10})$, giving
\begin{equation*}
\lim_{n \rightarrow \infty} D_{z,min}^{b-bistar} = -\frac{5}{4} \sqrt{10}.
\end{equation*}    
In stars and quasistars, the leading coefficients are $1/4 - 1/3 = -1/12$ and $1/(6\sqrt{5})$. Hence
\begin{equation*}
\lim_{n \rightarrow \infty} D_{z,min}^{star} = \lim_{n \rightarrow \infty} D_{z,min}^{quasi} = -\frac{\sqrt{5}}{2}.
\end{equation*}
In balanced bistar trees, the leading coefficient of $D_{max}^t-D_{rla}$ is $3/4-1/3=5/12$ while that of $(\V_{rla}^t)^{1/2}$ is $1/(6\sqrt{10})$, giving 
\begin{equation*}
\lim_{n \rightarrow \infty} D_{z,max}^{b-bistar} = \frac{5}{2}\sqrt{10}.
\end{equation*}    
In stars and quasistars, the leading coefficients are $1/2 - 1/3 = 1/6$ and $1/(6\sqrt{5})$. Hence
\begin{equation*}
\lim_{n \rightarrow \infty} D_{z,max}^{star} = \lim_{n \rightarrow \infty} D_{z,max}^{quasi} = \sqrt{5}.
\end{equation*}
These limiting values are consistent with figure \ref{scaling_of_zeta_score_figure}.



\begin{figure}
\centering
\includegraphics[scale = 0.8]{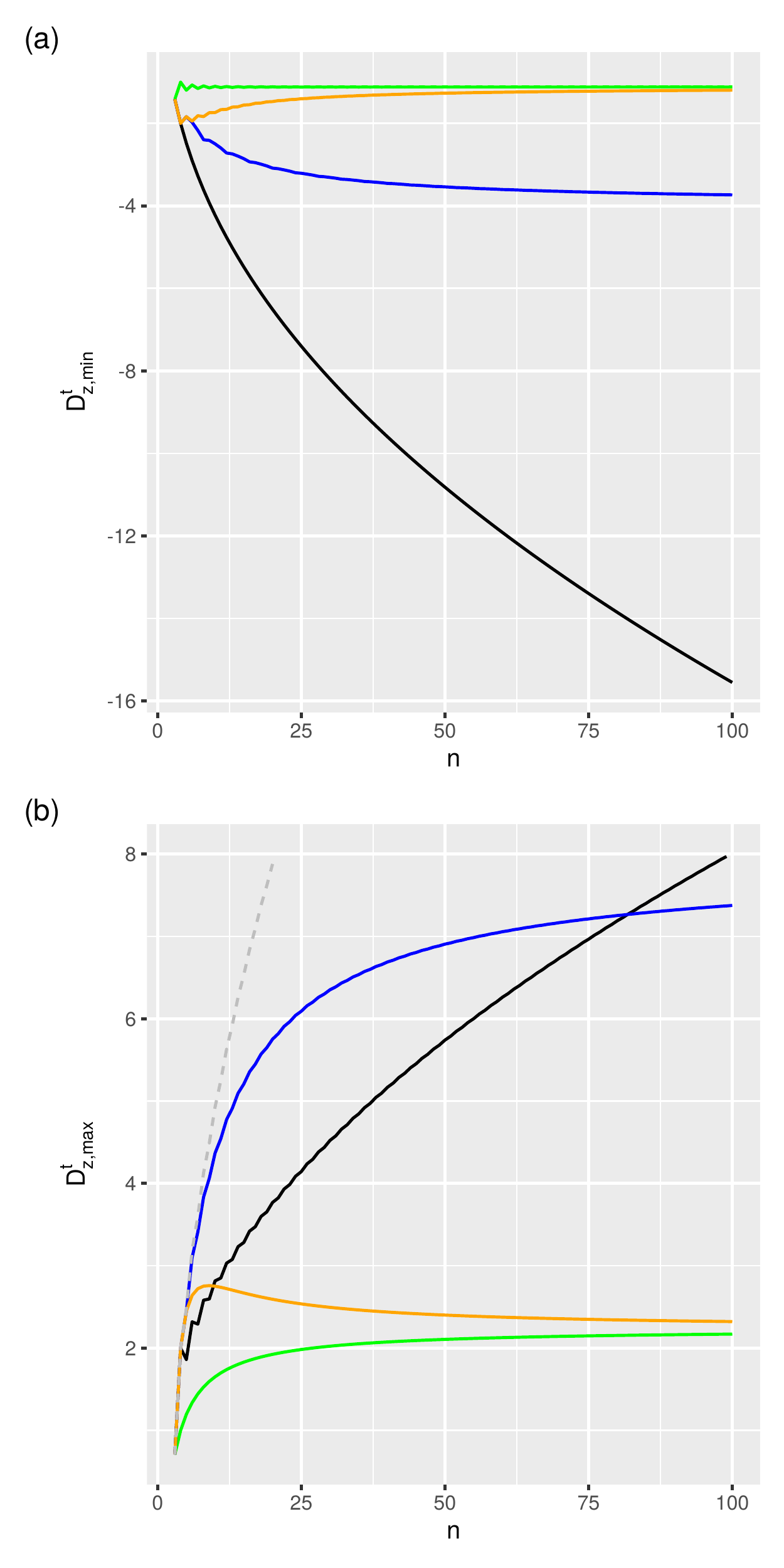}
\caption{\label{scaling_of_zeta_score_figure} The scaling of $D_{z,min}^{t}$ and $D_{z,max}^{t}$ as a function of $n$, the number of vertices of the tree $t$, for different trees: linear trees (black), balanced bistar trees (blue), quasistar trees (orange) and star trees (green). For reference, the upper bound of $D_{z,max}^{t}$, i.e. $(D_{max}^{b-bistar} - D_{rla})/(\V_{rla}^{linear})^{1/2}$ (dashed gray line) is also shown. }
\end{figure}

\subsection{The minima and the maxima of $D_{z,min}^t$}

Equation \ref{Dmin_chain_equation} in combination with \cite{Ferrer2018a}
\begin{equation}
\V_{rla}^{linear} \leq \V_{rla}^t \leq \V_{rla}^{star}
\label{D_variance_equation}
\end{equation}
yield that  
\begin{equation}
D_{z,min}^{linear} \leq D_{z,min}^t \leq D_{z,min}^{star}.
\end{equation}

\subsection{The minima and the maxima of $D_{z,max}^t$}

$D_{max}^{star} \leq D_{max}^t \leq D_{max}^{b-bistar}$ (Theorems \ref{maximum_D_max_theorem} and \ref{minimum_D_max_theorem}) in combination with equation \ref{D_variance_equation} yield
\begin{equation}
D_{z,min}^{star} \leq D_{z,max}^t \leq \frac{D_{max}^{b-bistar} - D_{rla}}{\V_{rla}^{linear}}.
\end{equation}
Again, the latter upper bound is unlikely to be a tight upper bound of $D_{z,max}^t$ because the two kinds of trees involved (a balanced bistar and a linear tree) are not the same. 
Interestingly, figure \ref{scaling_of_zeta_score_figure} (b) shows that $D_{z,max}^{bistar} \leq D_{z,max}^{linear}$ only for $n < 82$. 

\begin{figure}
\centering
\includegraphics[scale = 0.8]{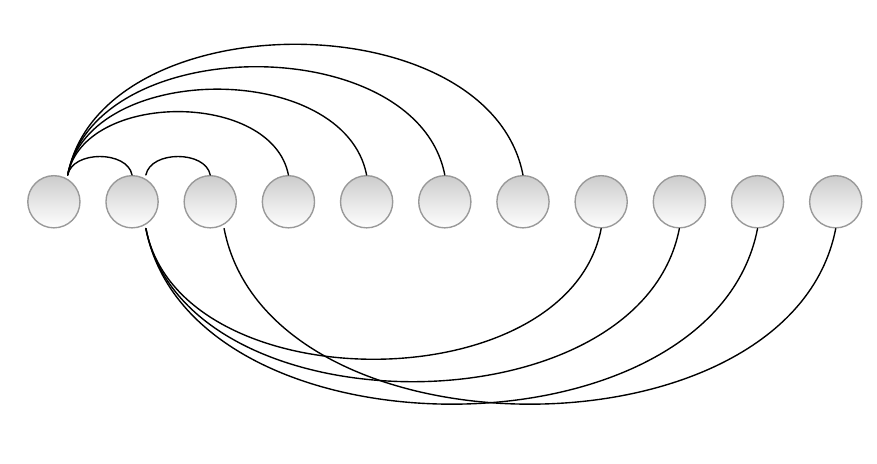}
\caption{\label{quasi_balanced_bistar_tree_figure} A linear arrangement of a quasibistar tree that maximizes $D_{z,max}^{t}$ for any tree of 11 vertices. }
\end{figure}

The computational analysis (\ref{validation_appendix}) in Table \ref{zeta_score_max_table} indicates that  
\begin{equation}
D_{z,max}^t \leq D_{z,max}^{b-bistar}
\label{zeta_score_max_conjecture_equation}
\end{equation}
for $n \leq 10$, consistently with figure \ref{scaling_of_zeta_score_figure} (b). 
The balanced bistar tree is the only tree maximizing $D_{z,max}^t$ up to $n = 10$ (Table \ref{zeta_score_max_table}). Contrary to expectations, the trend is broken for $n = 11$ because 
\begin{equation*}
D_{z,max}^t \leq D_{z,max}^{t^*}
\end{equation*}
with $t^*$ being some caterpillar tree that is neither a bistar nor a linear tree. That tree has only three internal vertices and is indeed a quasibistar tree (Fig. \ref{quasi_balanced_bistar_tree_figure}): it is a balanced bistar tree of $10$ vertices where one of the leaves has been connected to an extra vertex (leading to the degree sequence $k_1 = k_2 = 5$, $k_3 = 2$ and $k_i = 1$ for $4 \leq i \leq 11$).

\begin{table}
\caption{\label{zeta_score_max_table} 
Maximum $D_{z,max}^t$ as a function of $n$ and statistical properties of the trees that reach it. The format is based on that of Table \ref{Delta_max_table}. }	
\begin{indented}
\item[]
\begin{tabular}{lrcrrrrrrrr}
\br
$n$ & $D_{z,max}^{b-bistar}$ & Maximum & Kind of tree & $K_2$ & $k_1$ & $n_1$ & $L^t$ & $\left<l\right>^t$ & $D_{min}^t$ & $D_{max}^t$ \\
    &                           & $D_{z,max}^t$ \\
\mr
3 & 0.71 & 0.71 &  linear star b-bistar & 6 & 2 & 2 & 2 & 1.33 & 2 & 3 \\ 
4 & 2 & 2 &  linear quasi b-bistar & 10 & 2 & 2 & 3 & 1.67 & 3 & 7 \\ 
5 & 2.45 & 2.45 &  quasi b-bistar & 16 & 3 & 3 & 3 & 1.8 & 5 & 12 \\ 
6 & 3.1 & 3.1 &  b-bistar & 22 & 3 & 4 & 3 & 1.93 & 7 & 19 \\ 
7 & 3.41 & 3.41 &  b-bistar & 30 & 4 & 5 & 3 & 2 & 9 & 27 \\ 
8 & 3.84 & 3.84 &  b-bistar & 38 & 4 & 6 & 3 & 2.07 & 11 & 37 \\ 
9 & 4.06 & 4.06 &  b-bistar & 48 & 5 & 7 & 3 & 2.11 & 14 & 48 \\ 
10 & 4.37 & 4.37 &  b-bistar & 58 & 5 & 8 & 3 & 2.16 & 17 & 61 \\ 
11 & 4.54 & 4.56 &  cat & 62 & 5 & 8 & 4 & 2.33 & 18 & 74 \\ 

\br
\end{tabular}
\end{indented}
\end{table}

\subsection{The relationship with $D_{z,rla}$}

We define $D_{z,rla}^t$ as the expected the value of $D_z^t$ in a random linear arrangement (rla) of a given tree $t$.
As $D_{min}^t$, $D_{rla}$ and $\V_{rla}^t$ are constant given a tree $t$, one has 
\begin{eqnarray*}
D_{z,rla}^t & = & \E_{rla}\left[\frac{D^{t} - D_{rla}}{(\V_{rla}^t)^{1/2}}\right] \\
            & = & \frac{\E_{rla}[D^{t}] - D_{rla}}{(\V_{rla}^t)^{1/2}} \\
            & = & 0. 
\end{eqnarray*}
The fact that
\begin{equation*}
D_{min}^t \leq D_{rla} \leq D_{max}^t
\end{equation*}
gives
\begin{eqnarray*}
D_{z,min}^t \leq D_{z,rla}^t \leq D_{z,max}^t.
\end{eqnarray*}

\section{Discussion}

\label{discussion_section}

The main results of the preceding sections have been validated using a computational procedure described in \ref{validation_appendix}.

We have investigated the limits of the variation of $D^t$, the sum of edge lengths of trees of a given size $n$ (Table \ref{summary_table}). As for $D_{min}^t$, we have contributed with new formulae for the class of caterpillar trees that depend only on $n$ and the vertex degrees, complementing the pioneering research in \cite{Horton1997a}. These formulae have allowed us to obtain formulae for the subclass of bistar trees that depend only on $n$ and $k_1$, the maximum degree, which in turn have allowed us to obtain new formulae that depend only on $n$ for specific trees: quasistar trees and balanced bistar trees. 
\cite{Horton1997a} obtained a lower bound for $D_{min}^t$ (Table \ref{summary_table}) that gives actually the exact value of $D_{min}^t$ when $t$ is a caterpillar. We have contributed with a much shorter proof of the argument and showing that the lower bound is actually a significant improvement with respect to previous attempts to provide a lower bound of $D_{min}^t$ based on vertex degrees \cite{Petit2003a,Ferrer2013b}. Therefore, although $D_{min}^t$ can be calculated in polynomial time employing existing algorithms \cite{Shiloach1979,Esteban2015a,Chung1984}, $D_{min}^t$ can be calculated in constant time for caterpillar given trees of size $n$, $\DegreeSecondMoment$ and $q$ (Table \ref{summary_table}).  

As for $D_{max}^t$, we have not found a simple enough formula for the class of caterpillar trees but we have obtained one for the subclass of bistar trees as function of $n$ and $k_1$. 
Thanks to this work we have obtained new formulae that depend only on $n$ for specific trees: quasistar trees and balanced bistar trees (Table \ref{summary_table}). The new formula of $D_{max}^t$ for linear trees has been obtained employing an independent analysis. A unified derivation of $D_{max}^t$ for linear trees and bistar trees, as well as a general but simple formula of $D_{max}^t$ for caterpillar trees, should be the subject of future research. Finally, we delimited the range of variation of $D_{max}^t$, obtaining the following chain of inequalities
\begin{equation}
D_{rla} \leq D_{max}^{star} \leq D_{max}^t \leq D_{max}^{b-bistar}. 
\label{chain_of_inequalities_D_equation}
\end{equation}
The importance of this chain is two-fold. First, it indicates that the problem of maximizing $D^g$ and that of minimizing $D^g$ are not symmetric, because the corresponding chain for the minimization problem does not involve balanced bistar trees (equation \ref{Dmin_chain_equation}).
Second, it links the problem of maximizing $D^t$ without constraints (i.e. $D_{max}^t$) with the problem of maximizing $D^t$ under the planarity constraint (i.e. $D_{max,P}^t$), since $D_{max,P}^{t} \leq D_{max,P}^{linear} = D_{max}^{star}$ \cite{Ferrer2013b}. The finding indicates that any tree has a linear arrangement reaching the maximum possible $D^t$ for any tree under the planarity constraint, namely $D_{max}^t \geq D_{max,P}^{linear} = D_{max}^{star}$ (\cite{Ferrer2013b} did not address the question of whether $D_{max,P}^{t} = D_{max,P}^{linear} = D_{max}^{star}$ for any other tree $t$). Real syntactic dependency trees are almost planar in the sense that edge crossings are scarce \cite{Ferrer2017a} and the origin of such a characteristic is being debated \cite{Gomez2019a}.


In this article, we have established some mathematical foundations for the analysis and development of optimality scores based on $D^t$ and explored some implications for the limits of the variation of two scores: $\Gamma^t$ and $\Delta^t$. We have obtained the following chains of inequalities:
\begin{eqnarray}
0 = \Delta_{min}^t \leq \Delta_{rla}^t \leq \Delta_{max}^{star} \leq \Delta_{max}^t \label{chain_of_inequalities_Delta_equation} \\ 
1 = \Gamma_{min}^t \leq \Gamma_{rla}^t \leq \Gamma_{max}^{star} \leq \Gamma_{max}^t. \label{chain_of_inequalities_Gamma_equation}
\end{eqnarray}
We conjecture that $\Gamma_{max}^t \leq \Gamma_{max}^{linear}$ and that the linear tree is the only maximum of $\Gamma_{max}^t$ (Table \ref{Gamma_max_table}). A linear tree is the tree that minimizes the denominator of $\Gamma_{max}^t$. The numerator is maximized by a balanced bistar tree but it is easy to show (just using the formulae in Table \ref{summary_table}) that $\Gamma_{max}^t \leq \Gamma_{max}^{linear}$ for any tree $t$ that is a bistar. 
Similarly, we have obtained the following chains of inequalities for the $z$-score:
\begin{eqnarray}
D_{z,min}^{linear} \leq D_{z,min}^t \leq D_{z,min}^{star} \leq D_{z,rla} = 0 \label{chain_of_inequalities_zeta_score_min_equation} \\
0 = D_{z,rla} \leq D_{z,max}^{star} \leq D_{z,max}^t. \label{chain_of_inequalities_zeta_score_max_equation}
\end{eqnarray}
	
The problem of the trees that maximize $\Gamma_{max}^t$, $\Delta_{max}^t$ and $D_{z,max}^t$ should receive further investigation in two directions: characterizing the trees that maximize these scores (proving or refuting the conjectures above) or, at least, expanding the range of $n$ for which the true optima are known. We hope that our findings stimulate further research on optimality scores in linear arrangements.

\ack
We are very grateful to L. Alemany-Puig for his careful revision of the manuscript. We also thank two anonymous reviewers for their very valuable feedback. 
RFC is supported by the grant TIN2017-89244-R from MINECO (Ministerio de Econom\'{\i}a, Industria y Competitividad) and the recognition 2017SGR-856 (MACDA) from AGAUR (Generalitat de Catalunya). 
CGR is funded by the European Research Council (ERC), under the European Union's Horizon 2020 research and innovation
programme (FASTPARSE, grant agreement No 714150), the ANSWER-ASAP project (TIN2017-85160-C2-1-R) from MINECO and Xunta de Galicia
(ED431C 2020/11, ED431G2019/01, and an Oportunius program grant to complement ERC grants).
JLE is funded by the grants TIN2016-76573-C2-1-P and PID2019-109137GB-C22 from MINECO.

\appendix

\section{A derivation of $D_{max}^{linear}$}

\label{D_max_linear_trees_appendix}

\begin{figure}
\centering
\includegraphics{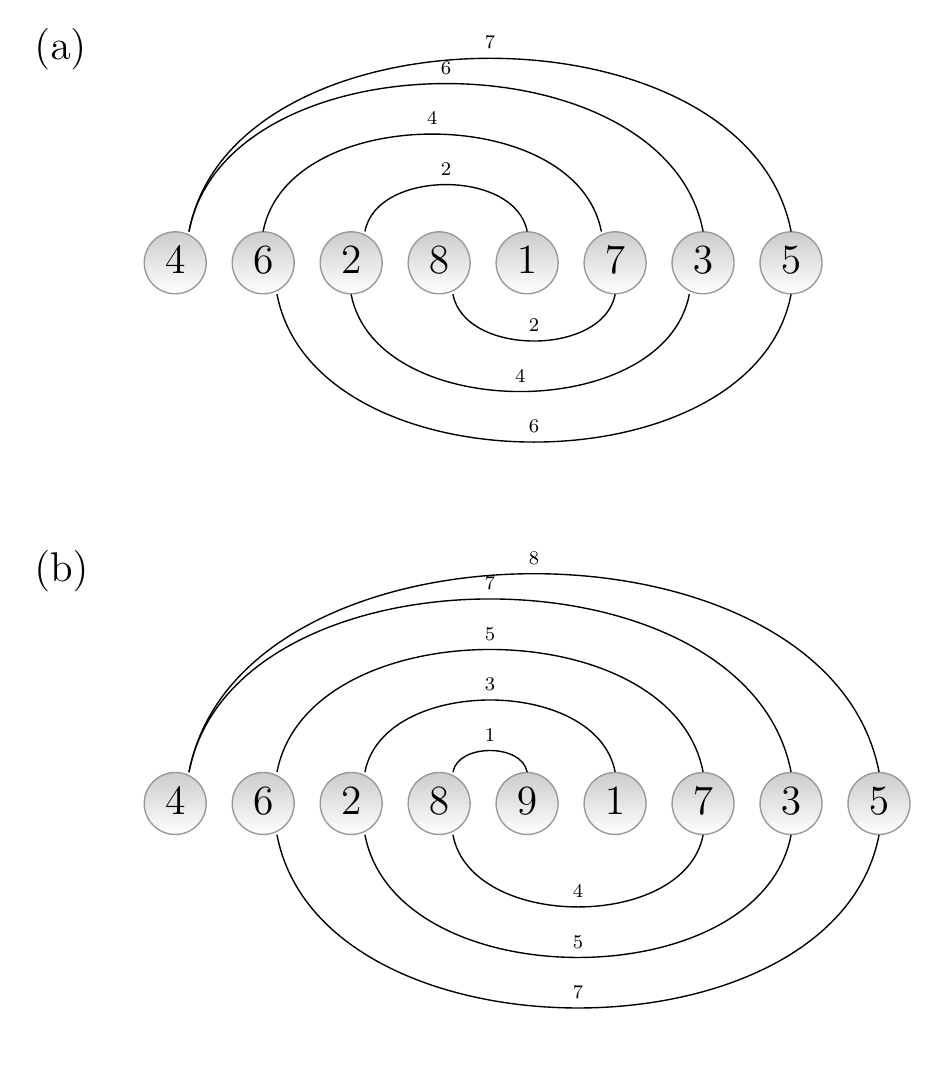} 
\caption{\label{linear_tree_figure} 
Maximum linear arrangements of linear trees with $n$ vertices. Vertex labels indicate the position of each vertex in the degree sequence. Edge labels indicate edge lengths. (a) $n = 8$ and $D^t = D_{max}^{linear} = 31$, generated by the permutation $\alpha = 5, 3, 7, 1, 8, 2, 6, 4$. (b) $n = 9$ and $D^t = D_{max}^{linear} = 39$, generated by the permutation $\alpha = 5, 6, 3, 8, 1, 9, 2, 7, 4$.}
\end{figure}

We apply a result by \cite{Chao_Liang1992a} to prove \ref{maximum_linear_arrangement_of_linear_tree_equation}.
Given a set $A_n=\{a_1,a_2,\dots,a_n\}$, where $a_1<a_2<\dots<a_n$, \cite{Chao_Liang1992a} shows how to calculate a permutation $\alpha=\alpha_1\alpha_2\dots\alpha_n$ such that for certain functions $f$,
\begin{equation*}
D_f(\alpha)=\sum_{i=1}^{n-1}f(|\alpha_i-\alpha_{i+1}|)
\end{equation*}
is maximized. $D_{max}^{linear}$ is a particular case with $A_n=\{1,2,\dots,n\}$ and $f$ the identity function ($id$). $A_n$ is the set of vertex labels when the vertices are labelled by a a depth first search traversal from one leaf to the other leaf assigning consecutive numbers between $1$ and $n$. 


Under these assumptions, for each permutation $\alpha$ of the form noted above, we can construct a linear arrangement whose cost is $D_{id}(\alpha)$. This is the arrangement where the $i$-th vertex (the vertex labelled with $i$) is assigned the position $\alpha_i$ in the linear arrangement. Thus, the length of each arc of the form $\{i,i+1\}$ is $|\alpha_{i} - \alpha_{i+1}|$, and the total sum of lengths is $\sum_{i=1}^{n-1}|\alpha_i-\alpha_{i+1}|$, which equals $D_{id}(\alpha)$. This correspondence between permutations and linear arrangements is trivially bijective, as one can go from linear arrangements to permutations following the inverse process.

We restate Theorem 1 in \cite{Chao_Liang1992a} under our particular conditions as follows:
\begin{thm}[Chao and Liang, \cite{Chao_Liang1992a}]
\label{teo_chao_liang}
A permutation of $\{1,2,...,n\}$ is maximum if it maximizes
\begin{equation*}
D(\alpha)=\sum_{i=1}^{n-1}|\alpha_i-\alpha_{i+1}|.
\end{equation*}
If $n=2c$, then the maximum permutations with $\alpha_1>\alpha_n$ are those satisfying the following three conditions:
\begin{enumerate}
\item[(i)] $\alpha_1=c+1$, $\alpha_n=c$
\item[(ii)] $\alpha_2\alpha_4\cdots\alpha_{2c-2}$ is a permutation of $\{1,2,\dots,c-1\}$
\item[(iii)] $\alpha_3\alpha_5\cdots\alpha_{2c-1}$ is a permutation of $\{c+2,c+3,\dots,2c\}$
\end{enumerate}
If $n=2c+1$, then the maximum permutations with $\alpha_1>\alpha_n$ are those satisfying the following three conditions:
\begin{enumerate}
\item[(iv)] $\alpha_1=c+1$, $\alpha_n=c$
\item[(v)] $\alpha_2\alpha_4\cdots\alpha_{2c}$ is a permutation of $\{c+2,c+3,\dots,2c+1\}$
\item[(vi)] $\alpha_3\alpha_5\cdots\alpha_{2c-1}$ is a permutation of $\{1,2,\dots,c-1\}$
\end{enumerate}
or the following three conditions:
\begin{enumerate}
\item[(vii)] $\alpha_1=c+2$, $\alpha_n=c+1$
\item[(viii)] $\alpha_2\alpha_4\cdots\alpha_{2c}$ is a permutation of $\{1,2,\dots,c\}$
\item[(ix)] $\alpha_3\alpha_5\cdots\alpha_{2c-1}$ is a permutation of $\{c+3,c+4,\dots,2c+1\}$
\end{enumerate}
The maximum permutations with $\alpha_1<\alpha_n$ are the reverse permutations of those specified above.
\end{thm}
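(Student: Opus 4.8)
The statement is the specialization of Theorem~1 of \cite{Chao_Liang1992a} to $A_n = \{1, \ldots, n\}$ and $f = id$, so the shortest route is simply to check that our hypotheses are a special case of theirs and quote the result. For a self-contained argument, the plan is to turn $D(\alpha)$ into a weighted sum of the values and then optimize by the rearrangement inequality. Writing $s_i = \mathrm{sign}(\alpha_{i+1} - \alpha_i) \in \{-1, +1\}$ and telescoping, one obtains $D(\alpha) = \sum_{j=1}^n c_j \alpha_j$ with $c_1 = -s_1$, $c_n = s_{n-1}$, and $c_j = s_{j-1} - s_j$ for $2 \le j \le n-1$. Hence each interior coefficient lies in $\{-2, 0, +2\}$ (equal to $+2$ at a peak, $-2$ at a valley, $0$ on a slope), each endpoint coefficient lies in $\{-1, +1\}$, and $\sum_j c_j = 0$.

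First I would fix the sign sequence, which fixes the multiset of coefficients, and apply the rearrangement inequality: $\sum_j c_j \alpha_j$ is maximized by assigning the largest values in $\{1, \ldots, n\}$ to the largest coefficients. Maximizing this bound over sign sequences, I expect to show that the optimum is attained by a full zigzag (every interior position a peak or valley), since any slope forces a wasted $0$ coefficient in place of a $\pm 2$ and strictly lowers the optimal assignment. Evaluating the saturated assignment for the zigzag then yields the value $\lfloor n^2/2 \rfloor - 1$, which separates into the even case $n = 2c$ and the odd case $n = 2c+1$.

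The harder direction is to show the characterized permutations are \emph{exactly} the maximizers. Here I would argue that equality forces both a full zigzag sign pattern and a tight rearrangement, i.e.\ the $+2$ positions must receive the top block of values, the $-2$ positions the bottom block, and the two endpoints the leftover middle value(s). Translating these two conditions into positions reproduces (i)--(iii) for $n = 2c$. For $n = 2c+1$ the zigzag can start with either an ascent or a descent, and these give two genuinely different coefficient multisets (two $+2$'s against one $-2$, or the reverse); both attain the maximum, and this is precisely the origin of the two sub-families (iv)--(vi) and (vii)--(ix). Finally, since $D$ is invariant under reversing $\alpha$, I restrict to $\alpha_1 > \alpha_n$ without loss of generality, the reverse permutations furnishing the maximizers with $\alpha_1 < \alpha_n$.

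The main obstacle is the equality analysis: converting ``full zigzag plus saturated rearrangement'' into the exact position and parity constraints, and in particular proving rigorously, by an exchange argument on the sign sequence, that a slope strictly reduces the objective, so that no non-zigzag permutation can tie the maximum.
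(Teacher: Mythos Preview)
The paper does not prove this theorem: it is quoted verbatim (specialized to $A_n=\{1,\dots,n\}$ and $f=\mathrm{id}$) from \cite{Chao_Liang1992a} and used as a black box to compute $D_{max}^{linear}$. Your opening sentence is therefore exactly the paper's ``proof'', and nothing more is required.

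Your self-contained sketch is additional content that the paper does not attempt, and it is essentially correct. The decomposition $D(\alpha)=\sum_j c_j\alpha_j$ with $c_j\in\{-2,0,+2\}$ at interior positions and $\pm 1$ at the endpoints, followed by the rearrangement inequality, is the standard route to this result and recovers the characterizations (i)--(ix) cleanly. Two minor remarks: first, your phrase ``two $+2$'s against one $-2$'' should read ``$c$ coefficients $+2$ against $c-1$ coefficients $-2$ (or the reverse)'' in the odd case $n=2c+1$; second, the reason the even case yields only one family while the odd case yields two is that for $n=2c$ the two endpoint coefficients have opposite signs ($+1$ and $-1$), so flipping the zigzag direction merely reverses the permutation, whereas for $n=2c+1$ both endpoints carry the same sign and the two zigzag directions give genuinely distinct coefficient multisets. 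The equality analysis you flag as the main obstacle is indeed where the work lies, but the exchange argument (replacing a slope by a peak--valley pair and showing a strict gain under the optimal assignment) goes through without difficulty here because the value set $\{1,\dots,n\}$ has no repeated elements.
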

Notice that according to conditions (i-iii) and (vii-ix) of the previous theorem, any linear tree has maximum linear arrangements that are divided into two parts: a first part with all the vertices with even labels and a second part with all the vertices with odd labels. Based on that property, we describe a procedure that generates concrete maximum linear arrangements such that they are easy to draw (figure \ref{linear_tree_figure}) and allow one to calculate $D_{max}^{linear}$ easily.

Let $c$ be $n/2$ and $\beta$ a function such that $\beta(x) = 1$ if $x < c$ and $\beta(x) = -1$ if $x > c$. When $n$ is even, the procedure is
\begin{enumerate}
\item 
Place the two leaves at the center of the linear arrangement (the vertex labelled with $n$ in position $c$ and the one labelled with $1$ in position $c+1$). This satisfies condition (i) of theorem \ref{teo_chao_liang}.
\item 
One leaf is the current vertex in the odd part and the other is the current vertex in the even part. 
\item
Repeat the following steps untill all vertices have been placed. 
  \begin{enumerate}
  \item
  Take the current vertex of the even part, say $x$, and place vertex $x+\beta(c)$ in the nearest free position in the odd part. This satisfies condition (iii) of theorem \ref{teo_chao_liang}. 
  \item
  Take the current vertex of the odd part, say $y$, and place vertex $y+\beta(y)$ in the nearest free position in the even part. This satisfies condition (ii) of theorem \ref{teo_chao_liang}. 
  \item 
  $x+\beta(x)$ becomes the current vertex of the odd part.
  \item
  $y+\beta(y)$ becomes the current vertex of the even part. 
  \end{enumerate}
\end{enumerate}
Figure \ref{linear_tree_figure} (a) shows the outcome of the procedure for $n = 8$, producing 2 arcs of length 2, 2 arcs of length 4, 2 arcs of length 6 and one arc of length 7. By adding each of the lengths produced for a linear tree with even $n$, one obtains
\begin{eqnarray}
D_{max}^{linear} & = & n-1 + 2\times(n-2) + 2\times(n-4) + \dots + 2\times 4 + 2\times 2 \nonumber \\ 
                 & = & n-1 + 2\times[ (n-2) + (n-4) + \dots + 4 + 2 ] \nonumber \\
                 & = & n-1 + 2\times\left[\frac{n-2}{2}\times\frac{2+n-2}{2}\right] \nonumber\\
	         & = & n-1 + \frac{n^2-2n}{2} \nonumber \\ 
	         & = & \frac{n^2-2}{2}. \label{linear_even_compact}
\end{eqnarray}
When $n$ is odd, the procedure is
\begin{enumerate}
\item 
Generate a linear arrangement for $n - 1$ vertices, i.e. $\alpha_1 \alpha_2...\alpha_{n-1}$ with the procedure above. That arrangement consists of an even part and odd part of $(n-1)/2$ vertices each. 
\item
Generate a linear arrangement for $n$ vertices from that of $n-1$ vertices by inserting vertex $n$ in the central position with respect to the linear arrangement of $n$ vertices, namely position $c + 1 = (n + 1)/2$ while displacing all vertices in the odd part one position to the right.    
\end{enumerate}
It is easy to see that the linear arrangement over $n$ vertices will be such that $\alpha_n = c + 1 = (n + 1)/2$ and $\alpha_1 = c + 2$ as expected from condition (viii) of theorem \ref{teo_chao_liang}
while there will be $(n-1)/2$ even vertices followed by $(n+1)/2$ odd vertices, thus satisfying conditions (viii) and (ix) of theorem \ref{teo_chao_liang}. 
Figure \ref{linear_tree_figure} (b) shows the outcome of the procedure for $n = 9$. When $n$ is odd (figure \ref{linear_tree_figure}) and reasoning analogously, we can build a maximum linear arrangement with a summation of lengths as the following
\begin{eqnarray}
D_{max}^{linear} & = & n-1 + 2\times(n-2) + 2\times(n-4) + \dots + 2\times 3 + 1 \nonumber \\ 
                 & = & n-1 + 2\times[ (n-2) + (n-4) + \dots + 5 + 3 ] + 1 \nonumber \\
                 & = & n + 2\times\left[\frac{n-3}{2}\times\frac{3+n-2}{2}\right] \nonumber \\
	         & = & n + \frac{n^2-2n-3}{2} \nonumber \\
	         & = & \frac{n^2-3}{2}. \label{linear_odd_compact}
\end{eqnarray} 
Finally, equations \ref{linear_even_compact} and ~\ref{linear_odd_compact} can be unified as equation \ref{maximum_linear_arrangement_of_linear_tree_equation}.

\begin{figure}
\centering
\includegraphics{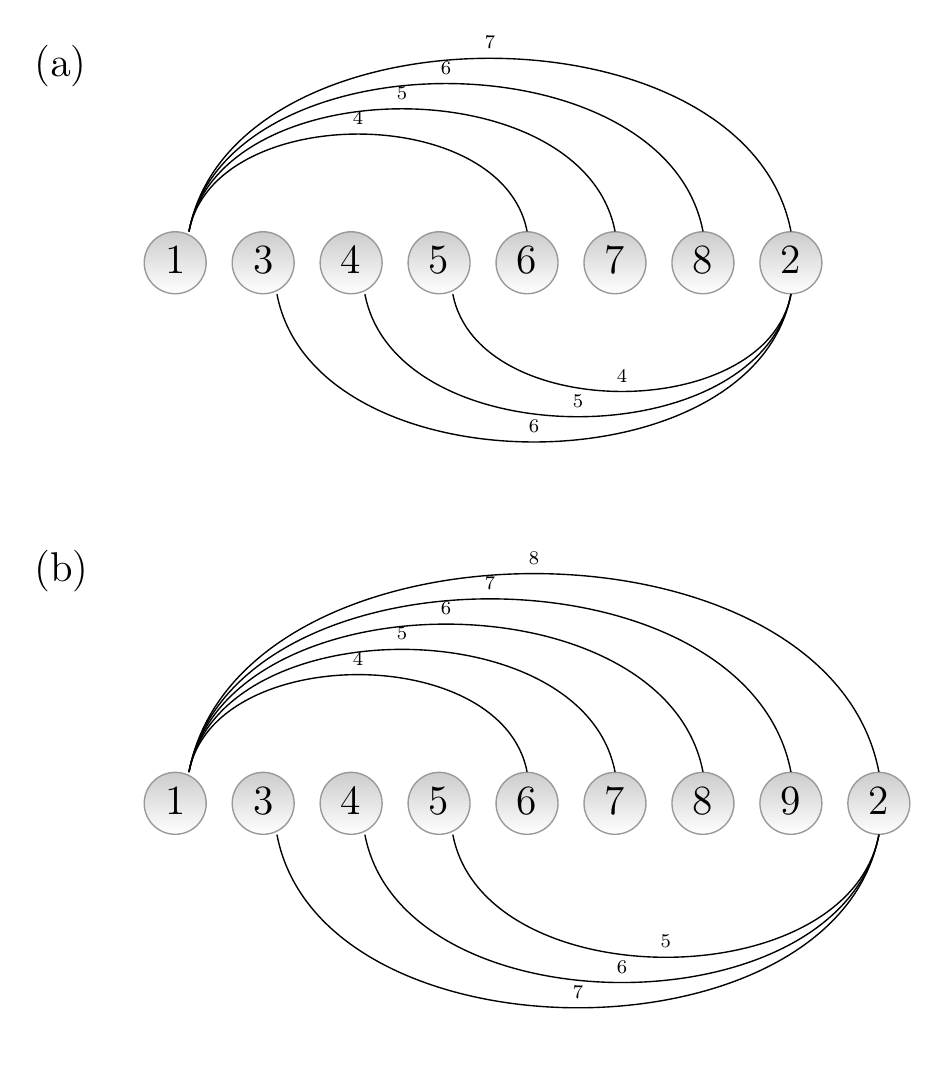} 
\caption{\label{bistar_figure} Extreme linear arrangements of balanced bistar trees of $n$ vertices. Vertex labels indicate the position of each vertex in the degree sequence. Edge labels indicate edge lengths. (a) $n = 8$ and $D^t = D_{max}^{b-bistar} = 37$. (b) $n = 9$ and $D^t = D_{max}^{b-bistar} = 48$. }
\end{figure}

\section{$D_{max}^{t}$ in bistar trees}

\label{D_max_bistar_tree_appendix}

We define $\gamma(i)$ as the set of adjacent vertices of $i$ \cite{Bollobas1998a}, also termed the set of 1st neighbours or nearest neighbours of $i$ \cite{Bogunya2003a}.
We define an extreme linear arrangement of a bistar tree as an ordering of the vertices following the one of the following templates: 
\begin{equation*}
1, \gamma(2)\setminus\{1\}, \gamma(1)\setminus\{2\},2  
\end{equation*}
as in figure \ref{bistar_figure}, or its symmetric, i.e.
\begin{equation*}
2, \gamma(1)\setminus\{2\}, \gamma(2)\setminus\{1\},1  
\end{equation*}

The following lemma indicates how to arrange a single vertex and its attached vertices so as to maximize its sum of edge lengths.
\begin{lem}
Suppose that $D_i^g$ is the sum of the lengths of the edges attached to the $i$-th vertex of a graph $g$ of $n$ vertices. 
Then $D_{i,max}^g$, the maximum value of $D_i^g$ over the $n!$ linear arrangements of the whole graph is
\begin{equation}
D_{i,max}^g = \frac{1}{2}k_i(2n - k_i - 1) 
\label{D_max_single_vertex_equation}
\end{equation}
\label{D_max_single_vertex_lemma}
and is achieved when vertex $i$ is placed at one of the ends of the linear arrangement and its adjacent vertices at the other end.
\end{lem}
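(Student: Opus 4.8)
The plan is to decouple the position of vertex $i$ from the positions of its neighbours and reduce the maximization to a simple selection problem. First I would fix vertex $i$ at an arbitrary position $p \in \{1,\dots,n\}$. Once $p$ is fixed, $D_i^g$ is just the sum of the distances from $p$ to the $k_i$ positions occupied by the neighbours of $i$, and the multiset of distances available from $p$ to the remaining $n-1$ positions is $M_p = \{1,2,\dots,p-1\} \cup \{1,2,\dots,n-p\}$. Since the $k_i$ neighbours may be placed at any $k_i$ of these positions independently of one another, the largest value of $D_i^g$ compatible with $i$ sitting at $p$, call it $S(p)$, is obtained by selecting the $k_i$ largest elements of $M_p$. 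Thus $D_{i,max}^g = \max_{1 \le p \le n} S(p)$, and the remaining work is to locate the optimal $p$ and evaluate the resulting sum.

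The core of the argument is to show that $S(p)$ is maximized when $i$ is at an end, i.e.\ $p \in \{1,n\}$. I would do this with an exchange argument: comparing $M_{p}$ with $M_{p-1}$ for $2 \le p \le \lceil n/2 \rceil$, passing from the former to the latter amounts to deleting a single element of value $p-1$ from the multiset and inserting a single element of value $n-p+1$, leaving everything else unchanged. Because $p \le \lceil n/2 \rceil$ forces $n-p+1 \ge p-1$, this replaces an available distance by one at least as large; and since increasing one element of a multiset can never decrease the sum of its $k_i$ largest elements, we get $S(p-1) \ge S(p)$. Iterating yields $S(1) \ge S(p)$ for every $p$ in the left half, and the left--right symmetry $S(p) = S(n+1-p)$ (the two multisets coincide) covers the right half; hence the maximum is attained at $p=1$, equivalently $p=n$.

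It then remains to evaluate $S(1)$. With $i$ at position $1$ the available distances are exactly $M_1 = \{1,2,\dots,n-1\}$, so the $k_i$ largest are $n-1, n-2, \dots, n-k_i$, and I would compute $S(1) = \sum_{j=1}^{k_i}(n-j) = k_i(n-1) - \frac{1}{2}k_i(k_i-1) = \frac{1}{2}k_i(2n-k_i-1)$, which is exactly \ref{D_max_single_vertex_equation}. Finally I would exhibit the optimizing arrangement explicitly: placing $i$ at position $1$ and its $k_i$ neighbours at the $k_i$ rightmost positions $n-k_i+1, \dots, n$ realizes this sum, matching the stated description that $i$ sits at one end and its neighbours at the other. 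The only delicate point is the exchange step, and I expect the main obstacle to be arguing cleanly that swapping a smaller available distance for a larger one cannot decrease the top-$k_i$ sum; this must be justified for every $k_i$, including the regime where the deleted value already belongs to the selected top $k_i$, rather than taken for granted.
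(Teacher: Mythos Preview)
Your argument is correct, and it reaches the same conclusion as the paper by a genuinely different route. The paper does not abstract to the multiset $M_p$ of available distances; instead it parametrizes an arbitrary arrangement by the position $h$ of vertex $i$ together with the split $(k_i^-,k_i^+)$ of neighbours to the left and right, writes the best attainable sum for that triple as $\sum_{j=1}^{k_i^-}(h-j)+\sum_{j=1}^{k_i^+}(n-j+1-h)$, and then shows by direct algebra that this quantity is bounded above by $\sum_{j=1}^{k_i}(n-j)$, the inequality collapsing to $0 \le (k_i-k_i^+)(n-h-k_i^+)+(h-1)k_i^+$. Your exchange argument---replacing the distance $p-1$ by $n-p+1$ when moving $i$ one step toward the end, and invoking the monotonicity of the top-$k_i$ sum under elementwise increase---is more conceptual and would extend with no change to, say, any distance function that is monotone in $|\,\cdot\,|$. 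The paper's approach, on the other hand, is entirely self-contained and needs no auxiliary lemma about top-$k$ sums; the point you flag as ``delicate'' (that increasing one element of a multiset cannot lower the sum of its $k$ largest entries) is precisely what the paper sidesteps by carrying out the comparison algebraically in one shot.
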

\begin{proof}
When the $i$-th vertex is placed at one of the ends of the linear arrangement and 
its adjacent vertices as far as possible (consecutively at the other end), 
\begin{equation*}
D_i^g = \sum_{j=1}^{k_i} (n-j),
\end{equation*}
which gives equation \ref{D_max_single_vertex_equation}. If the $i$-th vertex is not placed at one of the ends but its neighbours are still placed as far as possible, 
$D_i^g$ cannot exceed $D_{i, max}^g$. A detailed argument follows.  

We define $h$ as the position of vertex $i$ in the linear arrangement ($1 \leq h \leq n$), $k_i^-$ as the number of neighbours of $i$ placed before $i$ and $k_i^+$ as the number of neighbours of $i$ placed after $i$. In such a linear arrangement, the maximum value of $D_i^g$, i.e. $D_{i,max, k_i^-, k_i^+}^g$, is achieved placing the $k_i^-$ neighbours at the beginning of the linear arrangement and the $k_i^+$ neighbours at the end of the linear arrangement, producing
\begin{equation*}
D_{i,max, k_i^-, k_i^+}^g = \sum_{j=1}^{k_i^-} (h - j) + \sum_{j=1}^{k_i^+} (n - j + 1 - h).
\end{equation*}
We will show that $D_{i,max, h, k_i^-, k_i^+}^g \leq D_{i,max}^g = D_{i,max, 1, 0, k_i}^g = D_{i,max, n, k_i, 0}^g$, i.e. 
\begin{equation*}
\sum_{j=1}^{k_i^-} (h - j) + \sum_{j=1}^{k_i^+} (n - j + 1 - h) \leq \sum_{j=1}^{k_i} (n-j),
\end{equation*}
that is equivalent to
\begin{equation*}
\sum_{j=1}^{k_i-k_i^+} (h - j) + (1 - h)k_i^+ \leq \sum_{j= k_i^+ + 1}^{k_i} (n-j).
\end{equation*}
Rearranging the terms and calculating certain summations the inequality becomes 
\begin{equation*}
0 \leq (k_i - k_i^+)(n - h ) + \sum_{j = 1}^{k_i - k_i^+} j - \sum_{j= k_i^+ + 1}^{k_i} j +  (h - 1)k_i^+.
\end{equation*}
Calculating the remaining summations one obtains, after some routine calculations, 
\begin{equation*}
0 \leq (k_i - k_i^+)(n - h - k_i^+) + (h - 1)k_i^+,
\end{equation*}
which is trivially true because $k_i^+ \leq k_i$, $1 \leq h$ and $k_i^+ \leq n - h$ by definition.
\end{proof}
The previous lemma generalizes a previous result on $D_{max}^{star}$, that is achieved when the hub of the star is located at one of the ends of the linear arrangement \cite{Ferrer2013e} (figure \ref{maximum_linear_arrangements_figure} (a)). In a star tree $t$, $D^t$ is determined by the sum of edge lengths of the hub vertex. 

The following lemma indicates that an extreme linear arrangement of a bistar is actually a maximum linear arrangement.  
\begin{lem}
In a bistar tree $t$ of $n$ vertices and maximum degree $k_1$, $D_{max}^t$ is  
\begin{equation}
D_{max}^{bistar} = k_1 (n - k_1) + \frac{n}{2}(n - 3) + 1
\label{D_max_bistar_tree_equation}
\end{equation}
\label{D_max_bistar_tree_theorem}
and a extreme linear arrangement of $t$ is actually a maximum linear arrangement.
\end{lem}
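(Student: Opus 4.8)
The plan is to first confirm the claimed value by direct evaluation of the proposed arrangement, and then to prove optimality by a crossing-number (cut) argument. For the value, the template $1,\gamma(2)\setminus\{1\},\gamma(1)\setminus\{2\},2$ places hub $1$ at position $1$, the $k_2-1$ leaves of hub $2$ at positions $2,\dots,k_2$, the $k_1-1$ leaves of hub $1$ at positions $k_2+1,\dots,n-1$, and hub $2$ at position $n$. The edges then split into three groups: the inner edge, of length $n-1$; the edges from hub $1$ to its leaves, whose lengths form the arithmetic progression $k_2,k_2+1,\dots,n-2$; and the edges from hub $2$ to its leaves, whose lengths form $k_1,k_1+1,\dots,n-2$. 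Summing the two progressions with $k_2=n-k_1$ and adding $n-1$ gives $k_1(n-k_1)+\frac{1}{2}(n-1)(n-2)$, which equals \ref{D_max_bistar_tree_equation} since $\frac{1}{2}(n-1)(n-2)=\frac{n}{2}(n-3)+1$.

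For optimality I would use the identity $D=\sum_{t=1}^{n-1}c_t$, where $c_t$ is the number of edges with one endpoint in the $t$ leftmost positions and the other among the rest, because $|x_u-x_v|$ counts exactly the cuts separating $u$ and $v$. Since $t$ is a tree, $c_t=\kappa_t+\bar\kappa_t-1$, where $\kappa_t$ and $\bar\kappa_t$ count the components induced by the left and right blocks; a leaf is an isolated component precisely when its hub lies in the opposite block, a hub absorbs all its same-block leaves into one component, and two hubs in the same block merge through the inner edge. Fixing the hub positions $a<b$ and writing $\ell_2^L(t)$ for the number of hub-$2$ leaves among the leftmost $t$, this classification yields $c_t=t$ for $t<a$, $c_t=n-t$ for $t\ge b$, and $c_t=k_1+1-t+2\ell_2^L(t)$ for $a\le t<b$. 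Thus the cuts outside the hub span contribute forced (and small) amounts, whereas only the span $a\le t<b$ can reach the large values (up to $n-1$), governed by how far left the hub-$2$ leaves are pushed.

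From here two nested optimizations remain. The inner one, over leaf placements for fixed $a,b$, is a rearrangement: swapping a hub-$1$ leaf at $s$ with a hub-$2$ leaf at $t>s$ changes $D$ by $\delta(t)-\delta(s)$ with $\delta(x)=|a-x|-|b-x|$ nondecreasing, so pushing hub-$2$ leaves left and hub-$1$ leaves right never decreases $D$. The outer one, over $a,b$, should place the hubs at the two ends ($a=1,b=n$): this makes the entire range $1\le t\le n-1$ the span, and one checks that the optimally placed values $c_t=k_1+t-1$ for $t\le k_2$ and $c_t=n-1+k_2-t$ for $t\ge k_2$ reproduce exactly the arrangement of the first paragraph, so the bound is attained. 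The main obstacle is precisely this outer step: the per-cut maxima are not simultaneously attainable, since for small $t$ one can force $c_t=n-t$ (by placing both hubs within the leftmost $t$ positions), which exceeds $k_1+t-1$; this is why the naive combination $D\le D_{1,\mathrm{max}}^t+D_{2,\mathrm{max}}^t-1$ from Lemma \ref{D_max_single_vertex_lemma} overshoots the true value by $n-2$. One must therefore argue globally that widening the high-value span, by moving each hub to its end, more than compensates for the cuts it vacates, through a discrete monotonicity argument in $a$ and $b$ that distinguishes the cases $a\le k_2$ and $a>k_2$ (the star case $k_2=1$ being degenerate and checked directly).
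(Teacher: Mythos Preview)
Your approach is correct in outline and genuinely different from the paper's. The paper's proof is much shorter: it decomposes $D^t = D_1^t + D_2^t - d_{12}^t$ (the two hub-sums minus the doubly counted inner edge), invokes Lemma~\ref{D_max_single_vertex_lemma} to observe that the extreme arrangement simultaneously attains $D_1^t = D_{1,\max}^t$ and $D_2^t = D_{2,\max}^t$, and then reads off the value as $D_{1,\max}^t + D_{2,\max}^t - (n-1)$. Your route via the cut identity $D = \sum_{t} c_t$ with $c_t = \kappa_t + \bar\kappa_t - 1$ is more structural: the formula $c_t = k_1 + 1 - t + 2\ell_2^L(t)$ on the hub span is correct, and the inner rearrangement step (pushing hub-$2$ leaves left via the monotone $\delta(x)=|a-x|-|b-x|$) is a clean way to optimise over leaf placements for fixed hub positions.

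What the paper's decomposition buys is brevity: once both hub-sums are seen to reach their individual maxima in the same arrangement, the formula follows in one line. What your decomposition buys is transparency about where the genuine difficulty sits. You are right that the naive bound $D \le D_{1,\max}^t + D_{2,\max}^t - 1$ overshoots by $n-2$; the paper handles this by asserting that a maximiser must have both $D_i^t$ at their maxima (hence hubs at opposite ends, hence $d_{12}=n-1$), but that implication is stated rather than derived in detail. Your outer optimisation over $(a,b)$ is the honest counterpart of that step, and it is the one place your sketch leaves open. To close it within your framework, note that after the inner step $D$ becomes an explicit piecewise expression in $(a,b)$ alone; the discrete monotonicity you describe (moving a hub one position toward its end never decreases the inner-optimised total) can then be verified directly, with the case split $a\le k_2$ versus $a>k_2$ handling whether the vacated cut lay in the ``rising'' or ``flat'' portion of the optimal $c_t$ profile.
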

\begin{proof}
A bistar tree can be seen as two star trees joined by a common edge. Then $D^t$ can be decomposed as
\begin{equation}
D^t = D_1^t + D_2^t - d_{12}^t,
\label{D_decomposition_bistar_tree_equation}
\end{equation}
where $D_i^t$ is the sum of the lengths of edges attached to the vertex with the $i$-th largest degree and $d_{12}^t$ is the length of the edge joining the two vertices with the largest degrees.
To maximize $D^t$ following equation \ref{D_decomposition_bistar_tree_equation}, one has to maximize $D_1^t$ and $D_2^t$. By lemma \ref{D_max_single_vertex_lemma}, $D_1^t$ is maximized placing vertex 1 at one end and its neighbours at the other end. Since $D_2^t$ also must be maximized then, by the same lemma, vertex 2 has to be placed at the opposite end (otherwise $D_1^t < D_{1,max}^t$ or $D_2^t < D_{2,max}^t$), which gives $d_{12}^t = n - 1$. Such a linear arrangement is an extreme linear arrangement of a bistar tree and  
equation \ref{D_decomposition_bistar_tree_equation} gives
\begin{eqnarray*}
D_{max}^{bistar} & = & D_{1,max}^t + D_{2,max}^t - (n - 1) \\
                 & = & \frac{1}{2}k_1(2n - k_1 - 1) + \frac{1}{2}(n - k_1)(n + k_1 - 1) - (n - 1).
\end{eqnarray*}
Equation \ref{D_max_bistar_tree_equation} is recovered after some algebra.
\end{proof}
Thanks to the preceding work, formulae of $D_{max}^t$ for specific bistar trees follow easily.
\begin{cor}
\begin{eqnarray}
D_{max}^{b-bistar} = \DmaxBalancedBistarTree \label{extreme_sum_of_edge_lengths_equation} \\
D_{max}^{quasi} = \DmaxQuasistarTree \nonumber \\
D_{max}^{star} = \DmaxStarTree. \nonumber
\end{eqnarray}
\end{cor}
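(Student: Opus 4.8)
The plan is to read all three identities directly off Lemma~\ref{D_max_bistar_tree_theorem} by specializing the single free parameter $k_1$ in equation~\ref{D_max_bistar_tree_equation}, since star, quasistar and balanced bistar trees are by definition bistar trees with $k_1 = n-1$, $k_1 = n-2$ and $k_1 = \LargestDegreeBalancedBistarTree$ respectively. Thus no new combinatorics is required: each case reduces to substituting the relevant value of $k_1$ into $D_{max}^{bistar} = k_1(n - k_1) + \frac{n}{2}(n-3) + 1$ and collecting terms.

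First I would dispatch the star tree. Setting $k_1 = n-1$ gives $k_1(n-k_1) = (n-1)\cdot 1 = n-1$, so that $D_{max}^{star} = (n-1) + \frac{n}{2}(n-3) + 1 = \frac{n(n-1)}{2} = \DmaxStarTree$, recovering the known result of \cite{Ferrer2013b}. Next, for the quasistar I would set $k_1 = n-2$, whence $k_1(n-k_1) = 2(n-2)$ and $D_{max}^{quasi} = 2(n-2) + \frac{n}{2}(n-3) + 1$; placing this over the common denominator $2$ collapses the numerator $n^2 + n - 6$ into $(n+3)(n-2)$, yielding $\DmaxQuasistarTree$. Both of these are purely routine polynomial manipulations because the substituted $k_1$ is an ordinary integer.

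The balanced bistar case is the only one requiring care, and is where I expect the sole (and mild) obstacle to lie, because $k_1 = \LargestDegreeBalancedBistarTree$ introduces a ceiling function. The key observation is that, using equation~\ref{2nd_largest_degree_equation}, the product term becomes $k_1(n-k_1) = \LargestDegreeBalancedBistarTree \cdot \SecondLargestDegreeBalancedBistarTree$, and the product of these two central integers satisfies the standard identity $\LargestDegreeBalancedBistarTree \cdot \SecondLargestDegreeBalancedBistarTree = \left\lfloor n^2/4 \right\rfloor = \frac{1}{4}(n^2 - {n \bmod 2})$. Substituting this into equation~\ref{D_max_bistar_tree_equation} gives
\begin{equation*}
D_{max}^{b-bistar} = \frac{1}{4}(n^2 - {n \bmod 2}) + \frac{n}{2}(n-3) + 1.
\end{equation*}
Finally I would place the non-modular part over the common denominator $4$ and recognize the algebraic identity $3n^2 - 6n + 4 = 3(n-1)^2 + 1$, which packages the polynomial contribution neatly and produces $D_{max}^{b-bistar} = \frac{1}{4}\left(3(n-1)^2 + 1 - {n \bmod 2}\right) = \DmaxBalancedBistarTree$. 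This settles the last identity and hence the corollary; the only subtlety throughout is the handling of the parity term ${n \bmod 2}$ coming from the ceiling, which the product identity above isolates cleanly.
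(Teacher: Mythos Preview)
Your proof is correct and follows the same approach as the paper, namely specializing $k_1$ in equation~\ref{D_max_bistar_tree_equation} to $n-1$, $n-2$ and $\LargestDegreeBalancedBistarTree$ respectively. The only difference is cosmetic: for the balanced bistar the paper splits into the cases $n$ even and $n$ odd and then reunifies, whereas you invoke the identity $\LargestDegreeBalancedBistarTree\cdot\SecondLargestDegreeBalancedBistarTree=\tfrac{1}{4}(n^2-{n\bmod 2})$ to handle both parities at once, which is a slightly cleaner way to reach the same endpoint.
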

\begin{proof}
$D_{max}^{b-bistar}$ is obtained applying $k_1 = \left\lceil n/2 \right\rceil$ to equation \ref{D_max_bistar_tree_equation} (Theorem \ref{D_max_bistar_tree_theorem}).
When $n$ is odd, $k_1 = (n + 1)/2$ and then equation \ref{D_max_bistar_tree_equation} gives
\begin{equation*}
D_{max}^{b-bistar} = \frac{3}{4}(n-1)^2.
\end{equation*}
When $n$ is even, $k_1 = n/2$, one obtains
\begin{equation*}
D_{max}^{b-bistar} = \frac{1}{4}(3(n-1)^2 + 1).
\end{equation*}
Therefore, for any $n$, $D_{max}^{b-bistar}$ follows equation \ref{extreme_sum_of_edge_lengths_equation}.  
Similarly, $D_{max}^{quasi}$ is obtained with $k_1 = n - 2$ and $D_{max}^{star}$ is obtained with $k_1 = n - 1$ after some mechanical work.  
\end{proof}

\section{The minimum $D_{max}^t$}

\label{minimum_D_max_appendix}

\begin{figure}
\centering
\includegraphics[scale = 0.7]{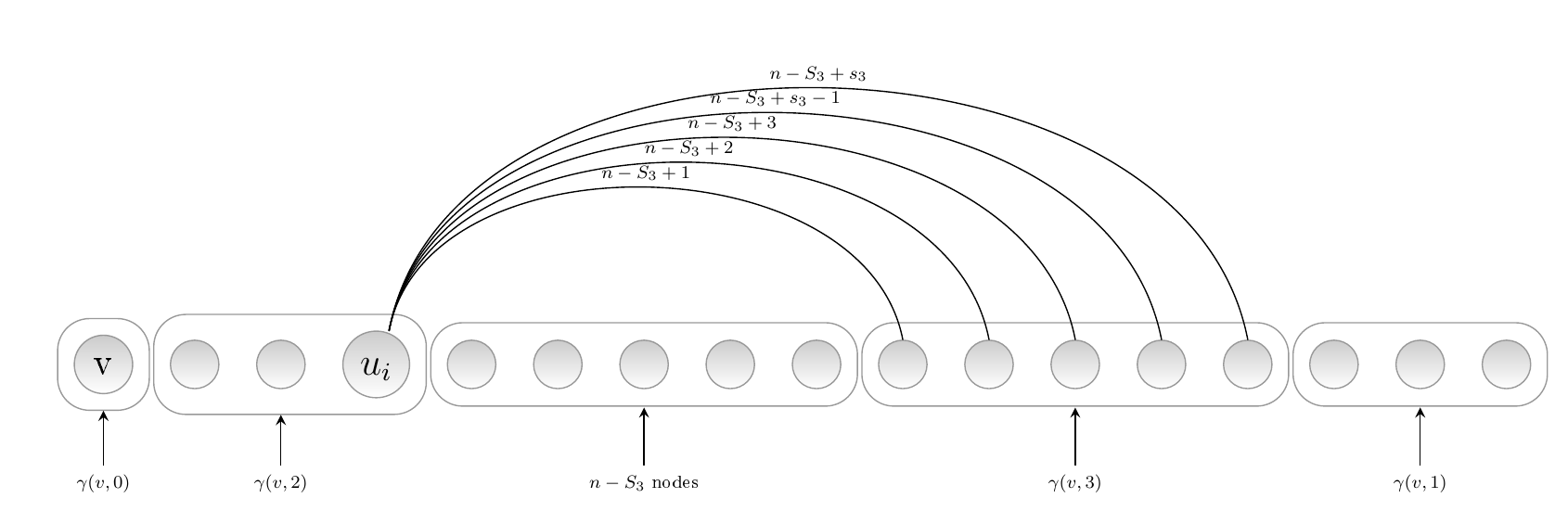} 
\caption{\label{induction_figure} 
The edges between $\gamma(v, 2)$ and $\gamma(v, 3)$ when they are formed exclusively with $u_i$, the vertex of $\gamma(v, 2)$ that is the closest to $\gamma(v, 3)$.
}
\end{figure}

The following theorem states that for any tree with $n$ vertices we can always find an arrangement with total length at least $\frac{n(n-1)}{2}$. 

\begin{thm}[Minimum $D_{max}^t$]
\label{minimum_D_max_theorem}
For any tree $t$ of $n$ vertices, 
\begin{equation*}
D_{max}^t \ge D_{max}^{star} = \DmaxStarTree.
\end{equation*}
\end{thm}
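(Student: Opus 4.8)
The plan is to reduce the statement to the construction of a single explicit linear arrangement, since $D_{max}^t$ is a maximum: it suffices to exhibit one arrangement of $t$ whose sum of edge lengths is at least ${n \choose 2}$. A convenient way to score a candidate arrangement is the cut decomposition $D = \sum_{i=1}^{n-1} c_i$, where $c_i$ is the number of edges of $t$ having one endpoint among the first $i$ positions and the other among the remaining $n-i$ positions; this follows by writing each $|\mathrm{pos}(u)-\mathrm{pos}(w)|$ as the number of unit gaps it spans. Since ${n \choose 2} = \sum_{i=1}^{n-1} i$, the target amounts to forcing the average crossing number to be at least $n/2$, which already makes clear that we must build a genuinely spread-out (near-maximal) arrangement rather than appeal to an average one: by equation \ref{sum_of_dependecy_lengths_random_equation} the expected value $\frac13(n^2-1)$ lies \emph{below} ${n \choose 2}$ for large $n$, so a random-arrangement argument cannot work.

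First I would root $t$ at a suitable vertex $v$ and take its breadth-first layers $\gamma(v,0)=\{v\}, \gamma(v,1), \ldots, \gamma(v,h)$. Because $t$ is a tree, every edge joins two consecutive layers, so the even layers and the odd layers form the two classes of the bipartition of $t$. I would then place all even layers as a contiguous block on the left and all odd layers as a contiguous block on the right, ordering vertices inside each block by layer so that the deeper layers sit nearer the centre and the shallow ones (hence the longest edges) sit at the outer ends. In this layout \emph{every} edge crosses the central cut between the two blocks, which forces $c_a = n-1$ there ($a$ being the size of the even class), while for the other cuts $c_i$ equals a partial sum of degrees read inward from one end; placing the high-degree vertices of each class towards the outer ends therefore makes all cross-layer edges long. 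The minimal case is the star (rooted at its centre), for which this arrangement produces edge lengths $1, 2, \ldots, n-1$ and hence $D = {n \choose 2}$ exactly, giving the base case and the equality.

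The heart of the argument is an induction on the number of layers, equivalently peeling the outermost layer $\gamma(v,h)$ and re-inserting it: I would assume the bound for the inner tree on $n-|\gamma(v,h)|$ vertices and show that re-inserting $\gamma(v,h)$ at the appropriate extreme end raises the running total by at least what is needed to stay above the star value, so that the per-layer increments telescope into $\sum_{k=1}^{n-1} k = {n \choose 2}$. I expect the main obstacle to be the least favourable distribution of the edges between $\gamma(v,h-1)$ and $\gamma(v,h)$, illustrated in figure \ref{induction_figure}: these edges need not be spread over many vertices, and the worst case is when they all emanate from the single vertex $u_i$ of $\gamma(v,h-1)$ placed closest to $\gamma(v,h)$ (turning that part of the tree into a sub-star with the shortest possible new edges). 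The work is to verify that even under this concentration the inserted edges, together with the inner edges whose lengths may change, sum to enough, and that the global minimum over all trees is attained precisely when there is a single non-trivial layer, i.e. by the star; spreading a layer's edges over several penultimate-layer vertices only lengthens them, so once the concentrated case is settled the general case follows.
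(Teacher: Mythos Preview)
Your plan is essentially the paper's proof: the same BFS-layer bipartite arrangement $\gamma(v,0),\gamma(v,2),\gamma(v,4),\ldots,\gamma(v,3),\gamma(v,1)$ and the same worst-case analysis in which all edges from $\gamma(v,i)$ to $\gamma(v,i+1)$ emanate from the single closest vertex $u_i$ (the cut decomposition you open with is never actually needed). The paper's induction is framed slightly more cleanly than yours: it keeps the full $n$-position arrangement fixed throughout and shows that the edges with both endpoints in $V_i=\bigcup_{j\le i}\gamma(v,j)$ already sum to at least $\sum_{j=1}^{S_i-1}(n-j)$, which sidesteps the ``inner edges whose lengths may change'' bookkeeping your peeling formulation incurs---and note that in your own layout the deepest layer $\gamma(v,h)$ sits at the \emph{centre}, not at an ``extreme end'', so that phrase should be corrected.
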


\begin{proof}

Let us consider a tree $t$ of $n$ vertices and define, $\gamma(v,i)$, the set of vertices that are at topological distance $i$ from vertex $v$ in $t$, with $\gamma(v,0) = \{ v \}$. 
Equivalently, $\gamma(v,i)$ is the set of $i$-th neighbours of $v$. $\gamma(v,1)$ is the set of vertices adjacent to $v$ \cite{Bollobas1998a}. 
For instance, in a star tree of $n$ vertices where the hub is vertex $1$ and $v \neq 1$ is some leaf, 
$\gamma(v,1) = \{1\}$ and $\gamma(v,2) = \{2,3,...,\}\setminus\{v\}$.

A linear arrangement that gives sum at least $D_{max}^{star}$ follows the template defined by the sequence
\begin{equation}
\gamma(v,0),\gamma(v,2),\gamma(v,4),\cdots,\gamma(v,3),\gamma(v,1)
\label{linear_arrangement_template_equation}
\end{equation}
This is not a proper arrangement because the $\gamma(v,i)$ is a set and its elements are not ordered. We can get a proper arrangement by ordering the vertices in every set $\gamma(v,i)$ in any arbitrary way.

Let $s_i$ be $|\gamma(v,i)|$ and $S_i=\sum_{j=0}^{i}s_j$.
We define $V_i$ as the set of vertices reached up to topological distance $i$, i.e.  
\begin{equation*}
V_i = \cup_{j=0}^i \gamma(v,j).
\end{equation*} 
Hence $S_i = |V_i|$.

Let us use induction on the topological distance $i$. 

{\em Induction hypothesis}. The sum of the lengths of the edges formed by vertices in $V_i$ is at least 
\begin{equation}
\sum_{j=1}^{S_i-1}(n - j).
\label{suma_parcial}
\end{equation}

{\em Base case.} For $i=0$, $S_i = 1$ and the sum of edge lengths must be zero trivially.

{\em Induction step.} Note that the number of vertices between $\gamma(v,i)$ and $\gamma(v,i+1)$, that is $s_{i+2}+s_{i+3}+\dots$, is $n-S_{i+1}$ (figure \ref{induction_figure}). 
According to the template of linear arrangement in equation \ref{linear_arrangement_template_equation}, the vertices in $\gamma(v,i+1)$ are the farthest away from those of $\gamma(v,i)$ among vertices with topological distance $i+1$ or more. 
Let $B$ the sum of lengths of the $s_{i+1}$ edges from $\gamma(v,i)$ to $\gamma(v,i+1)$. Suppose that these edges start from $u_i$, the vertex in $\gamma(v,i)$ nearer to $\gamma(v,i+1)$ in the linear arrangement, which implies the vertices in $\gamma(v,i)\setminus\{u_i\}$ must be leaves (figure \ref{induction_figure}). Then 
\begin{eqnarray*}
B & = & \sum_{j'=1}^{s_{i+1}} (n - S_{i+1} + j') \\
  & = & \sum_{j=S_{i+1} - s_{i+1}}^{S_{i+1}-1} (n - j) \\
  & = & \sum_{j=S_i}^{S_{i+1} - 1} (n - j).
\end{eqnarray*}
If edges from $\gamma(v,i)$ to $\gamma(v,i+1)$ involved any vertex in $\gamma(v,i)\setminus\{u_i\}$, then $B$ would increase.
Thus, thanks to the induction hypothesis, the sum of the costs of the edges from the vertices in $V_{i+1}$ is at least
\begin{eqnarray*}
\sum_{j=1}^{S_i-1}(n - j) + B & = & \sum_{j=1}^{S_i-1}(n - j) + \sum_{j=S_i}^{S_{i+1}-1} (n - j) \\ 
                                 & = & \sum_{j=1}^{S_{i+1}-1} (n - j) 
\end{eqnarray*}
as expected.

Let $\zeta(v)$ be the maximal topological distance to $v$ in some tree, then $S_{\zeta(u)} = n$ and equation \ref{suma_parcial} gives 
\begin{equation*}
D_{max}^t \geq \sum_{j=1}^{n - 1} j = \DmaxStarTree.
\end{equation*}
\end{proof}

\begin{figure}
\centering
\includegraphics{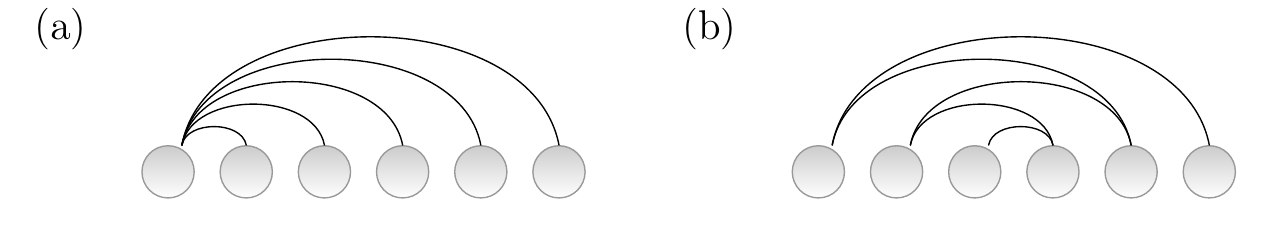} 
\caption{\label{maximum_linear_arrangements_figure} Linear arrangements of trees of $n$ vertices where $n = 6$ and $D^t = {n \choose 2} = 15$. (a) Linear arrangement of a star tree that is both a maximum linear arrangement and a maximum planar linear arrangement ($D^t = D_{max,P}^t = D_{max}^t$). (b) Linear arrangement of a linear tree that is a maximum planar linear arrangement but not a maximum linear arrangement ($D^t = D_{max,P}^t < D_{max}^t = 17$; recall Table \ref{summary_table}). }
\end{figure}

The previous theorem indicates that $D_{max}^t$ is at least its value for star trees (figure \ref{maximum_linear_arrangements_figure}). However, it is well-known that $D_{max}^{star}$ can also be achieved by a linear tree arranged as in figure \ref{maximum_linear_arrangements_figure} \cite{Ferrer2013b}. That arrangement follows from applying the template of arrangement in equation \ref{linear_arrangement_template_equation} with one of the leaves as the initial vertex.

\section{An alternative derivation of $D_{max}^t \leq D_{max}^{b-bistar}$}

\label{D_max_appendix}

\begin{figure}
\centering
\includegraphics{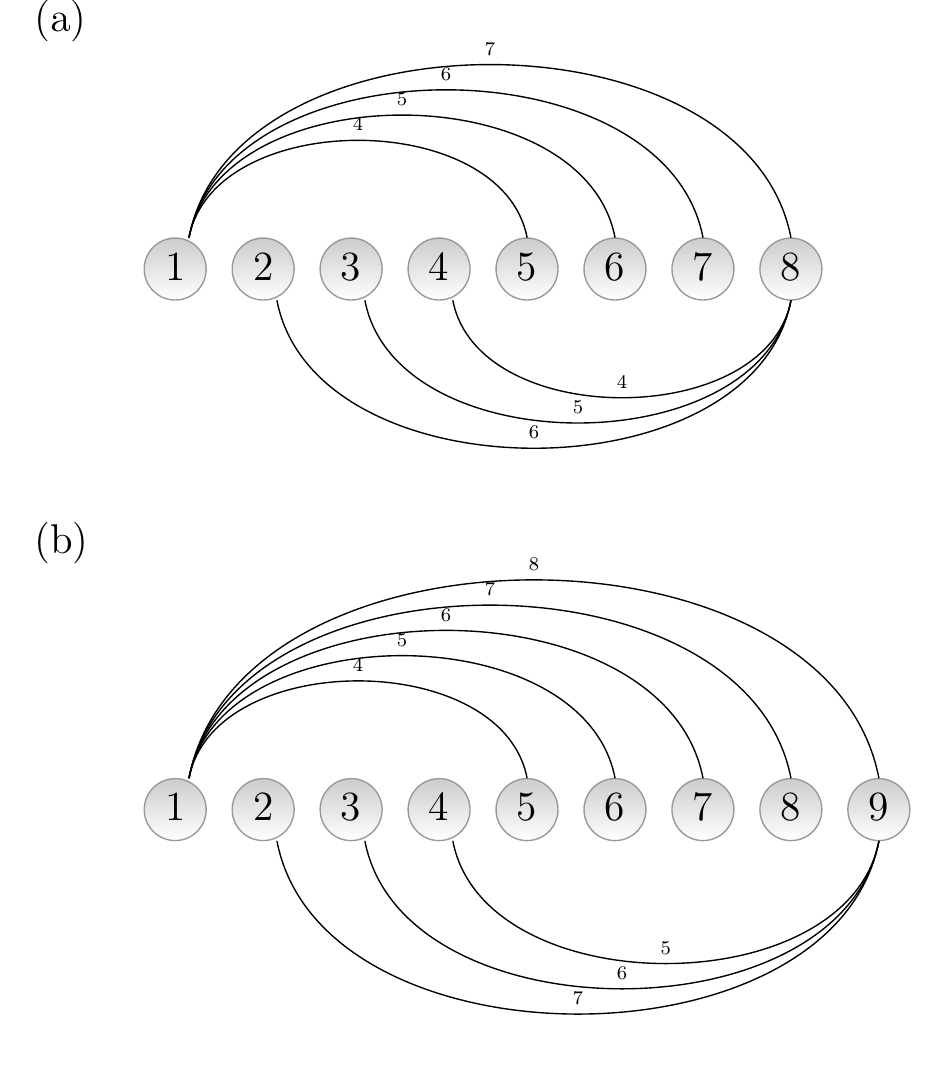} 
\caption{\label{bistar_appendix_figure} Linear arrangements of balanced bistar trees of $n$ vertices that maximize $D^t$. Vertex labels indicate the position of each vertex. Edge labels indicate edge distances. (a) $n = 8$ and $D^t = 37$. (b) $n = 9$ and $D^t = 48$. }
\end{figure}

Suppose that vertices are labelled with positions in the linear arrangement. An edge between vertices $i$ and $j$ is indicated by the unordered pair $\{i,j\}$. 
The problem of obtaining a tree that maximizes $D_{max}^t$ for any tree $t$ of $n$ vertices is equivalent to the problem of finding   
the maximum spanning tree of a complete graph where the weight of the edge $\{i,j\}$ is $|i - j|$, as each possible spanning tree bijectively corresponds to a linear arrangement of some tree of $n$ vertices, and the sum of weights corresponds to its value of $D$. 
We will show that a balanced bistar tree is the outcome of an algorithm that is based on Prim's algorithm to find the minimum spanning tree of a graph \cite{cormen:2001}. 
Prim's original algorithm solves a minimization problem. The maximization problem can be solved using the customary minimization version with edge weights defined as 
$n - |i - j|$. We use a variant of Prim's algorithm to solve the maximization problem that eases the proof:
\begin{enumerate}
\item
Initialize the tree $t$ with vertex $1$. 
\item
\label{growth_step}
Find the edge linking one vertex in $t$ and another vertex outside $t$ such that the weight is maximized. Add the edge (and the new vertex) to $t$.
\item
Repeat step \ref{growth_step} until $t$ has $n$ vertices.
\end{enumerate}

In the context of our application, i.e. the maximization of $D$ for any possible tree $t$ of $n$ vertices, this variant of Prim's algorithm becomes
\begin{enumerate}
\item
Initialize the tree $t$ with vertex $1$.
\item
Set $x$ to $2$ and $y$ to $n$.
\item
\label{growth_step_variant}
Compare the length of the edges $\{1, y\}$ and $\{x, n\}$.  
If the longest edge is $\{1, y\}$, add the edge (and vertex $y$) to $t$ and decrement $y$. Otherwise, add $\{x, n\}$ (and add $x$) to $t$ and increment $x$.
\item
Repeat step \ref{growth_step_variant} until $t$ has $n$ vertices.
\end{enumerate}
Notice that the vertices that do not belong to $t$ are in the interval $[x, y]$.
As for Step \ref{growth_step_variant}, notice that the longest edge liking one vertex in $t$, namely one vertex in $[1,n]\setminus[x,y]$, and another vertex outside $t$, namely one vertex in $[x,y]$, can only be $\{1, y\}$ or $\{x, n\}$.
It is easy to see that the execution of this algorithm produces edges that correspond to a balanced bistar tree (Table \ref{Prims_algorithm_table}) that is arranged linearly as in figure \ref{bistar_appendix_figure}.

\begin{table}
\centering
\begin{tabular}{llll}
Iteration & Edge        & Length  & $[x,y]$ \\
\hline
0         & -           & -       & $[2,n]$ \\
1         & $\{1,n\}$   & $n - 1$ & $[2,n-1]$ \\
2         & $\{1,n-1\}$ & $n - 2$ & $[2,n-2]$ \\
3         & $\{2,n\}$   & $n - 2$ & $[3,n-2]$ \\
4         & $\{1,n-2\}$ & $n - 3$ & $[3,n-3]$ \\
5         & $\{3,n\}$   & $n - 3$ & $[4,n-3]$ \\
...       & ...         & ...     & ... 
\end{tabular}
\caption{\label{Prims_algorithm_table} The edge added at every iteration, its length and $[x,y]$, the interval of vertex labels that do not belong to $t$ after adding the edge. }
\end{table}

\section{Validation}

\label{validation_appendix}

The main results of the article, namely Table \ref{summary_table} and the chains of inequalities in equations \ref{chain_of_inequalities_D_equation}, \ref{chain_of_inequalities_Delta_equation}, \ref{chain_of_inequalities_Gamma_equation}, \ref{chain_of_inequalities_zeta_score_min_equation} and \ref{chain_of_inequalities_zeta_score_max_equation} have been validated using a brute force procedure up to $n = 11$ inspired by that of \cite{Esteban2016a}. 
For a given $n$, the procedure calculates $D_{min}^t$ and $D_{max}^t$ for every distinct unlabelled tree and consists in generating all the $n^{n-2}$ labelled trees using Pr\"ufer codes as in \cite{Esteban2016a} while updating a two-level table containing the current value of $D_{min}^t$ and $D_{min}^t$ and a signature of the tree to speed up the tree isomorphism test \cite{Campbell1991a}. The signature of a tree is defined as a vector containing the canonical names \cite{Campbell1991a} of the trees rooted at each of the Jordan centers \cite{Hedetniemi1981a} of the original free tree. A tree has 1 or 2 Jordan centers \cite{Hedetniemi1981a}.   
For each labelled tree whose underlying unlabelled tree is $t$,
\begin{enumerate}
\item
$D^t$ is calculated interpreting vertex labels as vertex positions in the linear arrangement.
\item
The signature of $t$ for the test of tree isomorphism is calculated. 
\item
$t$ is searched in the collection of already visited unlabelled trees. The unlabelled trees are accessed using a two-level look-up table: first, by their value of $n\DegreeSecondMoment$ and second, by the degree spectrum. The frequency spectrum is a vector indicating the number of vertices of that have a certain degree $k$. Then, the corresponding unlabelled tree is found comparing all the stored trees with the same degree spectrum against the target tree using their respective signatures.       
\item 
If $t$ is new, then both $D_{min}^t$ and $D_{max}^t$ are set to $D^t$ temporarily. 
\item
If $t$ is not new, then $D_{min}^t$ and $D_{max}^t$ are updated based on $D^t$.
\end{enumerate}
At the end of the exploration, one has the exact value of $D_{min}^t$ and $D_{max}^t$ for every tree $t$.
As a sanity check, we verify that the number of labelled trees in the look-up table is the one expected by OEI A00055, \url{https://oeis.org/A000055}.
We also verify, for every tree $t$, that  
\begin{enumerate}
\item
$D_{min}^t$ coincides with the value obtained by the corrected version of Shiloach's algorithm \cite{Esteban2015a} as a sanity check.
\item
$D_{min}^t$ and $D_{max}^t$ match the predictions in Table \ref{summary_table} and satisfy the inequalities in \ref{chain_of_inequalities_D_equation},\ref{chain_of_inequalities_Delta_equation} and \ref{chain_of_inequalities_Gamma_equation}.
\end{enumerate}
Equations \ref{Delta_max_conjecture1_equation}, \ref{Delta_max_conjecture2_equation}, \ref{Gamma_max_conjecture_equation} and \ref{zeta_score_max_conjecture_equation} have been inferred using the procedure above.

\section*{References}

\bibliographystyle{unsrt}

\bibliography{../../../biblio_dt/main,../../../biblio_dt/twoplanaracl,../../../biblio_dt/Ramon,../../../biblio_dt/twoplanaracl_ours,../../../biblio_dt/Ramon_ours,../../../biblio_dt/optimization_in_biology,../../../biblio_dt/bibliography}

\end{document}